\newcommand{\comment}[1]{}
\def\ps@pprintTitle{%
\let\@oddhead\@empty
\let\@evenhead\@empty
\let\@oddfoot\@empty
\let\@evenfoot\@oddfoot}
\newtheorem{theorem}{Theorem}
\newtheorem{definition}{Definition}
\newtheorem{thm}{Theorem}[section]
\newtheorem{prop}[thm]{Proposition}
\newtheorem{example}[thm]{Example}
\theoremstyle{remark}
\providecommand*{\propertyautorefname}{Property}
\let\oldmarginpar\marginpar
\renewcommand\marginpar[1]{\oldmarginpar[\raggedleft\footnotesize #1]%
{\raggedright\footnotesize #1}}
\begin{document}
\begin{frontmatter}

\date{\today}

\title{Multiparticle Dynamics on the Triangular Lattice in Interacting Media}

\author[stewart]{Stewart McGinnis}
\address[stewart]{Department of Mathematics, Brigham Young University, Provo, UT 84602, USA, telestew@yahoo.com }
\author[ben]{Benjamin Webb}
\address[ben]{Department of Mathematics, Brigham Young University, Provo, UT 84602, USA, bwebb@mathematics.byu.edu}

\begin{abstract}
We study the motion of $N$ particles moving on a two-dimensional triangular lattice, whose sites are occupied by either left or right rotators. These rotators deterministically scatter the particles to the left (right), changing orientation from left to right (right to left) after scattering a particle. This interplay between the scatterers and the particle's motion causes a single particle to propagate in one direction away from its initial position. For multiple particles we show that the particles' dynamics can be vastly different. Specifically, we show that a particle can become entangled with another particle potentially causing the particle's trajectory to become periodic and that this can happen when the particles have the same or differing speeds. We describe two classes of periodic orbits based on the particles initial velocities. We also describe how a particle with an unbounded past trajectory implies that some, possibly other, particle(s) has an unbounded future trajectory in this and other related multiparticle models.
\end{abstract}

\begin{keyword}
multiparticle system, dynamic transition, entanglement, periodic orbit
\end{keyword}

\end{frontmatter}

\section{Introduction}\label{sec:1}

This paper continues the investigation of a particular Lorentz lattice gas (LLG) system considered in \cite{Grosfils99}. Before describing this model in detail, we note that in a standard LLG, a single particle moves along the bonds of a lattice, from lattice site to lattice site. When the particle arrives at a lattice site, it encounters a scatterer that modifies the particle's motion according to a given scattering rule.

The reason we study such systems is to understand the basic principles that underly dynamic processes such as diffusion, propagation, etc. \cite{Ruijgrok88,Wang94,Wang95.1,Wang95.3,Kong90.2,Kong90.1,Meng94}. For simplicity, the study of a particle's motion on a lattice is a natural choice, since a lattice has both a discrete structure and a high degree of regularity. In such systems the type of scattering rules that have been investigated are physically motivated rules such as rotators, mirrors, etc. \cite{Grosfils99,Wang95.1,Bunimovich93,Cao97,Wang95.2,Kong89}, which are used to model a particle moving through various types of media.

As mentioned, in a LLG when the particle arrives at a lattice site it encounters a scatterer that modifies its motion according to a given scattering rule. Depending on the scattering rule, each scatterer can also have one of a number of orientations. Moreover, the orientation of each scatterer can be fixed or may change depending on the given scattering rule. The initial orientation of each scatterer is called the LLG's initial configuration of scatterers, which models the medium through which the particle moves. The trajectory of a particle is then determined by the particular choice of (i) lattice, (ii) scattering rule, and (iii) initial configuration of scatters on the lattice. In previous studies, a wide variety of dynamics has been observed in such systems, depending on the choice of these three features (see, for instance, \cite{Wang95.1,Wang95.3,Kong90.2,Wang95.2,Bunimovich91}).

In the case that there is a scatterer at each lattice site and each scatterer is \emph{fixed}, i.e. is not affected by the particle's motion, the problem of determining the particle's motion through the lattice is related to problems in percolation theory \cite{Cao97,Bunimovich91,Ziff91,Herrmann83}. When the scatterers are \emph{not fixed}, as is the case in this paper, and are affected by the particle, the particle's motion is a much more dynamic process and has connections to problems in kinetic theory \cite{Bunimovich2002,Turcotte1999,Velzen1991}.

The reason a single particle is typically studied in a LLG is its relevance in certain systems. Originally, H. A. Lorentz assumed in modeling conductance that electrons passing through a conductor move independently of each other \cite{Lorentz05}. Under this assumption it is sufficient to study the dynamics of a single electron. However, if the single particle's motion effects the orientation of the scatterers in an LLG then, even if the particles do not directly interact, they may indirectly interact through their influence on the system's scatterers.

In a number of previously considered LLGs the system's single particle does interact the system's medium, i.e. orientation of scatterers, as the particle moves through the lattice \cite{Grosfils99,Bunimovich93,Chisholm14,Baker13}. In these models scatterers are what are referred to as \emph{flipping scatterers} that change orientation as the particle collides with them. These are some of the simplest models in which there is an interaction between particle(s) and medium. In such systems there is potential for one particle to indirectly effect the trajectory of another through the medium, in which case the Lorentz assumption of noninteraction does not hold and it is natural to consider multiple particles. The challenge is that with multiple particles the complexity of the system is greatly increased. This is likely the major reason few rigorous results exist in this setting although numerous results are known for the single particle variants (see previous references).

In this paper, we study the motion of $N\geq 2$ particles on the regular two-dimensional triangluar lattice, in which the lattice is fully occupied by flipping scatterers. The particular type of scatterers we consider here are flipping rotators, which rotate the particle's velocity either to its left or its right by an angle of $\theta=\pm 2\pi/3$, depending on whether the scatterer is \emph{oriented} to the left or the right, i.e. is a left or right scatterer, respectively. Furthermore, the scatterers \emph{flip} or change orientation after scattering a particle, flipping either from right to left or from left to right, depending on their original orientation, respectively.

In most LLGs the system's initial configuration of scatterers has a significant impact on the particle's dynamics (see, for instance, \cite{Webb14,Bunimovich04}). In the case we consider here the model's initial configuration is much less relevant, which is our primary reason for using this model as a stepping stone for rigorously analyzing multiparticle systems in interacting media (see Theorem \ref{thm:1} in this paper and \cite{Grosfils99} for more details).

Here we show that, as opposed to the nearly linear motion of a single particle, the trajectories of multiple particles in this model can become entangled both over short and arbitrarily long time scales. The entanglement can result in both periodic and aperiodic trajectories that are not possible in the single particle model. Moreover, these types of entangled trajectories occur when the particles have the same speed and in specific cases when the particles have different speeds.

In the case of periodic orbits we show that not only do different types of periodic orbits exist, depending on the initial positions and velocities of the particles, but that these periodic structures can have arbitrarily large size and period (see Theorem \ref{thm:2}). We also show that, conversely, if some particle in the model has an unbounded, therefore aperiodic, past then some, potentially other particle, must have an unbounded future. That is, if a particle with an unbounded past becomes entangled with another particle causing the particle's motion to become periodic, some other particle must inherit this unbounded motion and escape to infinity (see Theorem \ref{thm:un} and Figure \ref{Fig:5}).

We also consider the case in which particles have different speeds. Although this complicates the analysis of our model we are able to show that periodic structures can still exists between particles with differing speeds (see Figure \ref{Fig:6}). However, we are only able to find relatively few such structures for a very limited number of particles with different speeds. However, we are able to give a sufficient condition involving the ratio of the particles speeds that guarantee that no periodic trajectories can form it this ratio is too high, i.e. if one particle is moving much faster than the other (see Theorem \ref{thm:5}).

The paper is organized as follows. In Section \ref{sec:2} we introduce the multiparticle model we will consider throughout the paper and describe the motion of a single particle in this model. In Section \ref{sec:3} we begin our analysis of the multiparticle system describing mutual and nonmutual interactions. We then discus the notion of particle entanglement and in the specific case of periodic trajectories note that periodic trajectories can be classified into two classes of regular and irregular trajectories (see Proposition \ref{prop:corner}). In Section \ref{sec:4} we consider the case of aperiodic behavior. Here we discuss unbounded behavior and describe consequences of the time-reversability of our model including the equivalence of periodic and bounded trajectories. In Section \ref{sec:5} we consider the effect of differing speeds on the particle's dynamics and show that some properties remain in this generalization of the multiparticle model while others do not. Section \ref{sec:6} consist of a number of open problems and some closing remarks.

We note that although the main results of this paper are proven rigorously, the paper is written so that it can be followed without the need for the reader to work through the various proofs.

\section{The Multiparticle Model}\label{sec:2}
In this section we describe the specific multiparticle model which we will consider throughout this paper. The lattice over which the particles move is the triangular lattice $T=(\mathbb{T},\mathbb{B})$, with \emph{sites} $\mathbb{T}$ and \emph{bonds} $\mathbb{B}$. This lattice consists of regular triangles with sides of unit length, so that each lattice site has six nearest neighbors with which it shares a lattice bond of length 1.

Our main object of study in this paper is the motion of a number of particles $p_1,\dots,p_N$ as they move from their initial positions on the lattice along the lattice bonds from lattice site to lattice site. By way of notation we let $\mathbf{r}_i(t)\in\mathbb{R}^2$ denote the \emph{position} and $\mathbf{v}_i(t)\in\mathbb{R}^2$ denote the \emph{velocity} of particle $p_i$ at time $t\geq 0$ for $i=1,\dots,N$. We let $\mathbf{r}_i=\mathbf{r}_i(0)$ and $\mathbf{v}_i=\mathbf{v}_i(0)$ denote the $i$th particle's \emph{initial position} and \emph{initial velocity}, respectively, at time $t=0$.

For the sake of simplicity we will initially assume that each particle moves with constant unit speed. Later in Section \ref{sec:5} we will remove this assumption to illustrate how varying speeds effect the collective dynamics of the particles in this model. Since, for the moment, each particle moves with unit speed, the particle $p_i$ is at some lattice site at each time $t+\triangle t_i$ where $t\in\mathbb{N}=\{0,1,2,\dots\}$ with \emph{displacement} $\triangle t_i\in [0,1)$. We define the particle $p_i$'s \emph{trajectory} as the discrete sequence of lattice sites $\{\mathbf{r}_i(t+\triangle t_i)\}_{t\in\mathbb{N}}\subseteq\mathbb{H}$, where we consider only the discrete times $t=0+\Delta t_i,1+\Delta t_i,2+\Delta t_i,\dots$ for each particle instead of all time $t\geq 0$. Since the velocity of the particle does not exist at the moment it is scattered, we let $\mathbf{v}_i(t+\triangle t_i)$ denote the velocity of the particle immediately \emph{after} it is scattered, i.e. immediately after it leaves a lattice site.

We also assume that the displacement $\triangle t_i\neq \triangle t_j$ for $i\neq j$ so that no two particle's arrive at the same lattice site at the same point in time. This is not a strong assumption in the sense that if we choose any probability measure on $[0,1)$ absolutely continuous with respect to Lebesgue measure then the probability of $\triangle t_i=\triangle t_j$ for $i\neq j$ is zero. Lastly, and without loss in generality, we assume that $\triangle t_1 =0$ and $\triangle t_i<\triangle t_{i+1}$.

Particles moving along the same bond in opposite directions do not interact when they meet. This is the ``Lorentz" property of the system. However, at each lattice site $\mathbf{h}\in\mathbb{H}$, we assume that there is a scatterer, which rotates the velocity of the incoming particle, either to its left or to its right, by an angle of $\theta=\pm 2\pi/3$, depending on the scatterer's orientation. This is shown in Figure \ref{Fig:1}, where we use the convention, here and throughout the paper, that a closed circle denotes a left rotator and an open circle denotes a right rotator, respectively. As mentioned in the introduction, when a particle collides with a scatterer it \emph{flips} the scatterer's orientation either from right to left or left to right, respectively.

\begin{figure}
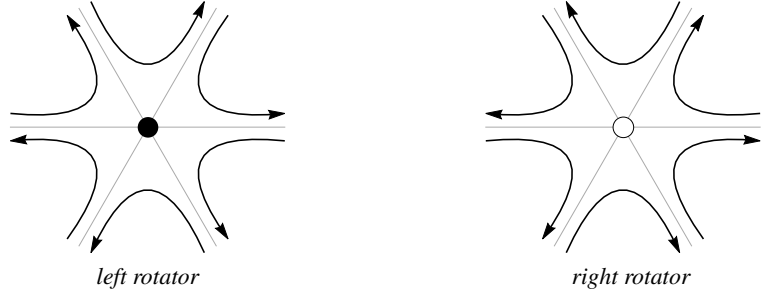

\begin{center}
\begin{tabular}{cc}
    \begin{overpic}[scale=.3]{TriLeftRotator.pdf}
    \put(32,-9){\small\emph{left rotator}}
    \end{overpic} & \hspace{2cm}
    \begin{overpic}[scale=.3]{TriRightRotator.pdf}
    \put(32,-9){\small\emph{right rotator}}
    \end{overpic}
\end{tabular}
    \vspace{0.2in}
\caption{Upon arriving at a left (right) rotator, indicated by a closed (open) circle, the particle's velocity is rotated to its left (right) by an angle of $\theta=\pm2\pi/3$. Immediately after the particle is scattered by the left (right) rotator the rotator's orientation flips so that it becomes a right (left) rotator.}\label{Fig:1}
\end{center}
\end{figure}

Note that each scatterer is, initially at time $t=0$, either a left scatterer or a right scatterer, i.e. is \emph{oriented} either to the left or to the right. With this in mind, we let $C=C(0)$ denote this \emph{initial configuration} of scatterers and let $C(t+\triangle t_i)$ denote the \emph{configuration} of scatterers on the lattice at time $t+\triangle t_i$ for $t\in\mathbb{N}$ and $i=1,\dots,N$. The configuration $C(t+\triangle t_i)$ consists of the collection of all the individual orientations of each scatter on the lattice at time $t+\triangle t_i$. For each lattice site $\mathbf{h}\in\mathbb{H}$ we let
\begin{equation*}
C(t+\triangle t_i,\mathbf{h})\in\{-1,1\} \ \ \text{for} \ \mathbf{h}\in\mathbb{H}, \ \ t\geq 0 \ \ \text{and} \ \ i=1,\dots,N
\end{equation*}
denote the \emph{orientation} of each scatterer at site $\mathbf{h}$ at time $t+\triangle t_i$. The orientation $C(t+\triangle t_i,\mathbf{h})=-1$ indicates, that at time $t+\triangle t_i$ the scatterer at lattice site $\mathbf{h}$ is a left scatterer, whereas the orientation $C(t+\triangle t_i,\mathbf{h})=1$ indicates, that the scatterer is a right scatterer. Furthermore, we let $C(\mathbf{h})\equiv C(0,\mathbf{h})$ denote the \emph{initial orientation} of the scatterer at $\mathbf{h}\in\mathbb{H}$ at time $t=0$.

Suppose each particle $p_i$ has initial position $\mathbf{r}_i$ and initial velocity $\mathbf{v}_i$. Then for an initial configuration $C$, we call $I=(\bar{\mathbf{r}},\bar{\mathbf{v}},C)$ an \emph{initial condition} where $\bar{\mathbf{r}}=(\mathbf{r}_1,\dots,\mathbf{r}_N)$ and $\bar{\mathbf{v}}=(\mathbf{v}_1,\dots,\mathbf{v}_N)$ are the collection of initial positions and velocities, respectively, of these particles. For an initial condition $I$, the $i$th particle's \emph{deterministic} equations of motion are given by
\begin{align}
\mathbf{r}_i(t+\triangle t_{j+1})&=
\begin{cases}
\mathbf{r}_i(t+\triangle t_i)+(\triangle t_{i+1}-\triangle t_{i})R[C(t+\triangle t_{i},\mathbf{r}_i(t+\triangle t_{i}))]\mathbf{v}_i(t+\triangle t_{i}) & \ \ \text{if} \ \ j=i\\
\mathbf{r}_i(t+\triangle t_{j})+(\triangle t_{j+1}-\triangle t_{j})\mathbf{v}_i(t+\triangle t_{j}) & \ \ \text{otherwise}\label{eq:1}
\end{cases}\\
\mathbf{v}_i(t+\triangle t_{j+1})&=
\begin{cases}
R[C(t+\triangle t_{i},\mathbf{r}_i(t+\triangle t_{i}))]\mathbf{v}_i(t+\triangle t_{i}) & \ \ \text{if} \ \ j=i\\
\mathbf{v}_i(t+\triangle t_{j}) & \ \ \text{otherwise}\label{eq:2}
\end{cases}\\
C(t+\triangle t_{j+1},\mathbf{h})&=
\begin{cases}
-C(t+\triangle t_{j},\mathbf{h}) &  \ \ \text{if} \ \  \mathbf{h}=\mathbf{r}_j(t+\triangle t_{j})\\
\hspace{0.1in} C(t+\triangle t_{j},\mathbf{h}) & \ \ \text{otherwise}\label{eq:3}
\end{cases}
\end{align}
for $t\in\mathbb{N}$, $1\leq i\leq N$, and $0\leq j\leq N$ where $\triangle t_{N+1}=1$. Equation \eqref{eq:1} gives the dynamics of the $i$th particle describing its piecewise linear motion between successive scatterings. The \emph{rotation operator} $R:\{-1,1\}\rightarrow\mathbb{R}^{2\times 2}$ in equation (\ref{eq:2}) is the rotation matrix
\begin{equation}\label{eq:matrix}
R[z]=
\left[\begin{array}{rr}
\cos\left(\frac{2\pi}{3}z\right)&\sin\left(\frac{2\pi}{3}z\right)\\
-\sin\left(\frac{2\pi}{3}z\right)&\cos\left(\frac{2\pi}{3}z\right)
\end{array}\right] \ \text{where} \ z\in\{-1,1\},
\end{equation}
which describes how the velocity of the particles are rotated, when a particle arrives at a scatterer. Equation \eqref{eq:3} describes the flipping motion of the scatterers.

Given an initial condition $I$, each of the particle's motion over the lattice is uniquely determined for all $t\geq0$ by equations \eqref{eq:1}--\eqref{eq:3}. This leads us to the following definition, which describes the general type of LLG we consider in this paper.

\begin{definition}\textbf{(The Multiparticle Model)}
Let $(T,I,N)$ denote the LLG with $N\geq1$ particles $p_1,p_2,\dots,p_N$, initial condition $I=(\bar{\mathbf{r}},\bar{\mathbf{v}},C)$, and equations of motion given by \eqref{eq:1}--\eqref{eq:3}. We will call this the \emph{multiparticle model} on the triangular lattice with $N$ particles and  initial condition $I$.
\end{definition}

\begin{figure}
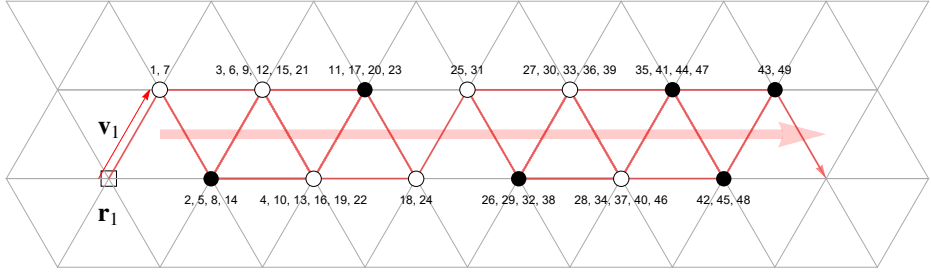

\begin{center}
\begin{overpic}[scale=.42]{modifiedSingle.pdf}
\put(11.5,16){$\mathbf{v}_1$}
\put(11.5,7){$\mathbf{r}_1$}
\end{overpic}\label{Fig:2}
\end{center}
\caption{The path of a single particle $p_1$ with initial position $\mathbf{r}_1=(0,0)$ indicated by the square, initial velocity $\mathbf{v}_1=(1/2,\sqrt{3}/2)$, and initial displacement $\Delta t_1=0$ is shown in which the particle moves through a horizontal strip for a randomly generated configuration of scatterers. The scatterer's initial orientation on the strip are shown. The numbers at each site in the strip indicate the time $t=1,2,\dots, 49$ the particle arrives at each site, respectively.}\label{Fig:2}
\end{figure}

In the case that $N=1$ we have a single particle and the following is known to hold.

\begin{theorem}\textbf{(Propagation of a Single Particle for any Initial Condition) \cite{Grosfils99}}\label{thm:1}
For any initial configuration of scatterers on a triangular lattice, the single particle in the $(T,I,1)$ model propagates in one particular direction through a strip on the lattice.
\end{theorem}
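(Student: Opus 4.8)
The plan is to follow the particle's velocity, which can only ever take one of three values. Since each scattering rotates the velocity by $\pm 2\pi/3$ and $3\cdot(2\pi/3)=2\pi$, the orbit of the initial velocity under the rotation group is a set of three unit vectors $\mathbf{e}_1,\mathbf{e}_2,\mathbf{e}_3$ making angles of $2\pi/3$ with one another, so $\mathbf{e}_1+\mathbf{e}_2+\mathbf{e}_3=\mathbf{0}$, and the velocity always equals one of these. A first observation is purely kinematic: comparing the incoming bond (which points from the scattering site in direction $-\mathbf{v}_{\mathrm{in}}$) with the outgoing bond (direction $R[\pm1]\mathbf{v}_{\mathrm{in}}$) shows these two bonds always make an angle of $\pi/3$; hence at every step the particle rounds the corner of one of the small triangular faces of $T$, and the left/right orientation of the scatterer selects which of the two faces incident to the incoming bond is rounded. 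Consequently the trajectory determines a path $F_0,F_1,F_2,\dots$ in the dual honeycomb lattice of triangular faces, where $F_{k+1}$ is obtained from $F_k$ by crossing to the other face incident to whichever bond the particle traverses next. I would record the state as (current bond and heading, orientations of the few rotators bordering the current face), a finite alphabet.

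Next I would use the flip rule to show the trajectory cannot be trapped in a bounded region $B$. If it were, then since a rotator differs from its initial orientation only at the finitely many already-visited sites of $B$, the full state (position, velocity, configuration restricted to $B$) lies in a finite set, so by determinism the trajectory is eventually periodic, tracing a closed lattice loop $\gamma$. Over one full period each visited site is flipped exactly as many times as it is visited, hence an even number of times, and I would derive a contradiction from this parity together with the corner-rounding geometry of $\gamma$: a closed loop made of $\pi/3$ corner turns cannot be re-traced consistently once the scatterers at its corners have all flipped, so some corner scatterer must have been used an odd number of times over the period. This is the single-loop shadow of the ``corner'' dichotomy behind Proposition~\ref{prop:corner}, and it rules out periodicity; combined with the previous paragraph, the particle must escape to infinity.

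It then remains to confine the escape to a strip and to fix the direction. Writing $n_j(T)=\#\{t<T:\mathbf{v}(t)=\mathbf{e}_j\}$, one has $\mathbf{r}(T)-\mathbf{r}(0)=\sum_j\bigl(n_j(T)-T/3\bigr)\mathbf{e}_j$, so the trajectory stays in a strip along the direction $\mathbf{e}_j-\mathbf{e}_k$ exactly when the third count $n_i(T)$ remains within a bounded distance of $T/3$. The natural tool is a transversal-flux estimate: fix a lattice line $\ell$ through sites of $T$ transverse to the presumptive drift direction; every time the particle moves from one side of $\ell$ to the other it must pass through a site of $\ell$ and flip that scatterer, and the corner analysis of the previous step should force a given scatterer on $\ell$ to be usable for such a crossing only boundedly often. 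One then wants to conclude that $\ell$ is crossed only finitely many times in total, which simultaneously bounds the transverse excursions (finite strip width) and forces the crossings to eventually all point the same way (one-directional propagation).

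The main obstacle is exactly this last step together with the width bound: ruling out a closed orbit and, more generally, showing that the particle cannot drift sideways indefinitely both require a genuinely global use of the flip rule, turning the heuristic ``flipped scatterers forbid re-tracing a corner'' into a rigorous conserved quantity — in effect a discrete winding-number/Gauss--Bonnet bookkeeping of the $\pm\pi/3$ corner turns. The reduction to three directions and the corner-rounding picture are elementary; it is this invariant, and the accompanying argument that the particle exhausts and is expelled from every transversal line, that do the real work (carried out in \cite{Grosfils99}).
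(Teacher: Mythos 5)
First, note that the paper does not prove this statement at all: Theorem~\ref{thm:1} is imported from \cite{Grosfils99}, and the only description the present paper gives of its proof is the sketch inside the proof of Theorem~\ref{thm:5} --- the trajectory is partitioned into disjoint \emph{blocking mechanisms}, short local configurations of $2$ to $7$ time steps, each of which is shown by a finite case analysis to force one lattice site of net forward progress while permitting at most one site of backward motion and no escape from a width-one strip. That argument is local and exhaustive; yours is global (finiteness plus a parity obstruction to periodicity, then a flux count across a transversal line). Your opening reduction --- the velocity lives forever in a single three-element parity class, and each scattering rounds a $\pi/3$ corner of a triangular face --- is correct and is exactly the observation the paper itself uses in Proposition~\ref{prop:corner}. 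The problem is that the two steps that would carry the load are asserted rather than proved, and as stated neither one closes.

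Concretely: (1) your parity contradiction against a bounded orbit does not go through. If the particle is confined, pigeonhole does give eventual periodicity, and each visited site is indeed flipped an even number of times per period; but there is no local contradiction at a site visited twice. A fixed-parity particle can arrive at the same site twice along the \emph{same} incoming bond and, because the rotator has flipped in between, leave along two \emph{different} outgoing bonds --- the orbit is simply not a simple closed curve that must be ``re-traced,'' so ``some corner scatterer must have been used an odd number of times'' does not follow from the geometry you invoke. (The two-particle triangle orbit of Figure~\ref{Fig:3} shows that closed loops of $\pi/3$ corner turns with every scatterer flipped evenly per period are perfectly consistent with the scattering rule; whatever excludes the single-particle version is a genuinely global fact, and it is precisely what the blocking-mechanism analysis of \cite{Grosfils99} establishes.) (2) The transversal-flux step has a quantifier gap: bounding the number of usable crossings \emph{per scatterer} on an infinite lattice line $\ell$ does not bound the total number of crossings of $\ell$ unless you already know the particle is confined transversally --- which is the strip-width statement you are trying to prove. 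Since you yourself flag both steps as the ``real work'' to be outsourced to \cite{Grosfils99}, the proposal is an honest plan but not a proof, and the plan it sketches is in any case a different (and currently incomplete) route from the local case-analysis the cited source actually carries out.
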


That is, the trajectory of a single particle is confined to a \emph{strip} in the triangular lattice, which is a region of the lattice bounded by two adjacent parallel lines. The motion of the particle in the strip is in a single direction in that any motion back towards its initial position is quickly blocked and the particle is forced forward (see Figure \ref{Fig:2}). A large part of \cite{Grosfils99} is devoted to describing the details of this blocking mechanism. The single particle \emph{propagates} in that its \emph{square-displacement} $\Delta(t)=||\mathbf{r}_1(0)-\mathbf{r}(t)||^2$ has the property that $\Delta(t)\sim t^2$, which up to a constant is the fastest a particle moving with unit speed can move through the lattice (see \cite{Webb14} for details).

It is worth noting that if the single particle in the $(T,I,1)$ model is observed from a distance its motion appears to be nearly linear as the particle moves in essentially a single direction in a strip that is two lattice sites wide. In this sense the particle can be thought of as having a motion that is approximately the motion of a particle moving through empty space under Newtonian laws of motion. However, the particle is in fact colliding with scatterers at every time step $t$ for all $t\in\mathbb{N}$.

In the following section we consider how this \emph{nearly linear motion} of a single particle can be effected by the presence of other particles as they move through the lattice.

\section{Multiple Particles, Entanglement, and Periodic Behavior}\label{sec:3}

The behavior of a single particle is important in the multiparticle model in that if a particle does not interact with another particle beyond some point in time then its motion from then on will be the nearly linear motion described in Theorem \ref{thm:1}. In fact, it is worth reiterating that particles in this model are assumed not to interact with each other. They do, however, indirectly \emph{interact} in that as one particle moves through the lattice it modifies the scatterers it collides with effecting the trajectory of other particles that later collide with these same scatterers.

\begin{figure}
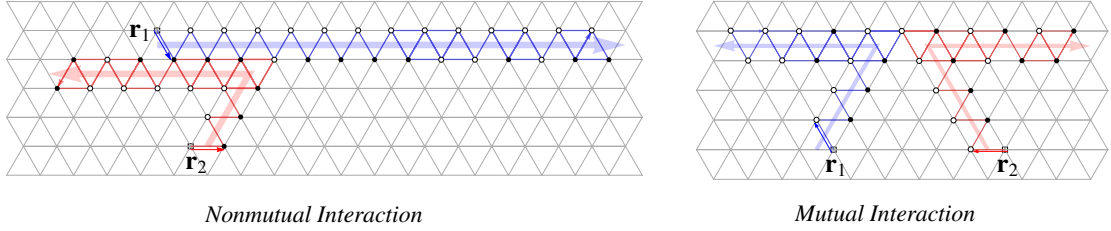

\begin{center}
\begin{tabular}{cc}
    \begin{overpic}[scale=.27]{1waycollision.pdf}
    \put(32,-5){\small\emph{Nonmutual Interaction}}
    \put(20.5,23){$\mathbf{r}_1$}
    \put(29,3){$\mathbf{r}_2$}
    \end{overpic} &
    \begin{overpic}[scale=.27]{modifiedDouble.pdf}
    \put(25,-7.5){\small\emph{Mutual Interaction}}
    \put(32,4){$\mathbf{r}_1$}
    \put(72,4){$\mathbf{r}_2$}
    \end{overpic}
\end{tabular}
    \vspace{0.1in}
\caption{Left: Particle $p_2$ interacts with particle $p_1$, shown in red and blue respectively, resulting in a deflection of its linear motion to the left. The interaction is nonmutual as particle $p_1$ does not interact with $p_2$ but continues its motion to the right. Right: A mutual interaction between $p_1$ and $p_2$ is shown, which causes the motion of $p_1$ to deflect to the left and the motion of $p_2$ to deflect to the right.}\label{Fig:Interactions}
\end{center}
\end{figure}

This notion of an \emph{indirect interaction} is defined as follows.

\begin{definition}\textbf{(Particle Interactions)}
Suppose in the $(T,I,N)$ model that particle $p_i$ collides with the scatterer at lattice site $\mathbf{h}\in\mathbb{H}$ at times $t_0$ and $t_1$, where $t_0<t_1$. Then $p_i$ \emph{interacts} with particle $p_j$ for $i\neq j$ at time $t_1$ if between time $t_0$ and $t_1$ particle $p_j$ collides with the scatterer at $\mathbf{h}\in\mathbb{H}$ an odd number of times. More generally, $p_i$ \emph{interacts} with a number of other particles if between consecutive collisions of $p_i$ with a particular scatterer these particles collectively collide with the same scatterer an odd number of times.
\end{definition}

Roughly speaking, particle $p_i$ interacts with another particle or some number of particles if, when $p_i$ returns to a previously visited scatterer, it finds the orientation of the scatterer has been flipped. It is worth noting that this notion of an \emph{interaction} is potentially \emph{nonmutual} in that $p_i$ may interact with $p_j$ but $p_j$ may not interact with $p_i$ (see Figure \ref{Fig:Interactions}).

One of the main differences between a single particle's trajectory and the trajectory of multiple particles is that two, or possibly more particles, can get entangled. By \emph{entangled} we mean the following.

\begin{definition}\textbf{(Multiparticle Entanglement)}
A subset $S$ of the particles in the $(T,I,N)$ model become \emph{entangled} if each particle in this subset interacts an infinite number of distinct times with at least one other particle in $S$ and this is not true for any proper subset of $S$.
\end{definition}

In the simplest case, two particles are said to \emph{entangle} if at least one of the particles continues to influence the trajectory of the other. An important special case of entanglement is a periodic orbit. The trajectory of a particle $p_i$ is \emph{periodic} if there is a smallest time $\tau_i<\infty$, such that the particle's position satisfies $\mathbf{r}_i(t+\triangle_i+\tau_i)=\mathbf{r}_i(t+\triangle_i)$ for each $t\in\mathbb{N}$ or equivalently $\mathbf{r}_i(t)=\mathbf{r}_i(t+\tau_i)$ for all $t\geq 0$. If this is the case we call $\tau_i$ the particle's \emph{period}. The \emph{size} of a periodic orbit is the number of distinct lattice sites the particle visits in one period of its motion.

\begin{figure}
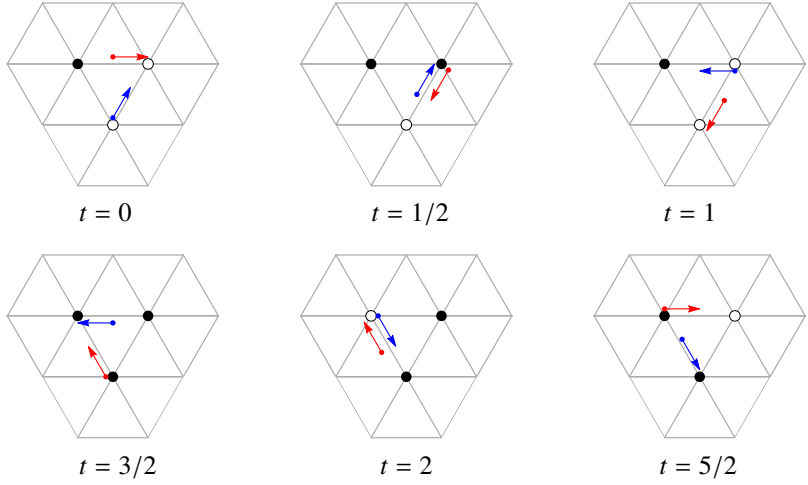

\begin{center}
\begin{tabular}{c}
	\begin{overpic}[scale=.9]{triangle_2by3.pdf}
    \put(12,30.5){$t=0$}
    \put(46,30.5){$t=1/2$}
     \put(80,30.5){$t=1$}
    \put(12,1){$t=3/2$}
     \put(47,1){$t=2$}
    \put(79.5,1){$t=5/2$}
    \end{overpic}
\end{tabular}
\end{center}
\caption{The simplest period orbit in the $(T,I,N)$ model is shown consisting of two particles moving along the vertices of a single triangle in opposite directions. The positions of the particles $p_1$ and $p_2$ are shown in blue and red, respectively, at times $t=0,1/2,1,\dots,5/2$; with blue and red arrows indicating their respective velocities. The displacement of these particles are $\Delta t_1=0$ and $\Delta t_2=1/2$. At each time $t$ the orientation of each scatterer on the triangle is shown.}\label{Fig:3}
\end{figure}

The reason a particle with a periodic trajectory is \emph{entangled} is that if there is a last time the particle interacts with another particle, then its trajectory is thereafter as described in Theorem \ref{thm:1}, which is nonperiodic. It is worth noting that even if a particle in the $(T,I,N)$ model has a periodic trajectory the behavior of the LLG as a system may not be periodic. The reason is that the multiparticle model $(T,I,N)$ is referred to as \emph{periodic} if there is a $\tau>0$ such that each of
\begin{equation}\label{eq:forward}
\bar{\mathbf{r}}(t+\tau)=\bar{\mathbf{r}}(t), \ \bar{\mathbf{v}}(t+\tau)=\bar{\mathbf{v}}(t), \ \text{and} \ C(t+\tau)=C(t) \ \text{for all} \ t\geq0.
\end{equation}
If $(T,I,N)$ is periodic then each particle's trajectory is also periodic. However, if only a fraction of the particles in the model have a periodic trajectory then the model as whole is not periodic.

The simplest form of periodicity is considered on the following example.

\begin{example}
Consider the two particle system $(T,I,2)$ with particles $p_1$ and $p_2$ shown in Figure \ref{Fig:3}. The positions of $p_1$ and $p_2$ are indicated by the blue and red dot for $t=0,1/2,1,\dots,5/2$, respectively. The corresponding blue and red arrows indicated the particle's velocity respectively. The particle's displacements are $\Delta t_1=0$ and $\Delta t_2=1/2$ so that exactly one particle is at a site at times $t=k/2$ for $k=0,1,2,\dots$. The two particles move periodically either clockwise or counterclockwise on a single triangle forming the smallest possible periodic orbit both in terms of size and period, both of which are 3.
\end{example}

One of the main results in this paper is that periodic orbits can be arbitrarily large in both size and period. This is a consequence of the existence of highly regular periodic orbits which can be extended indefinitely.

\begin{theorem}\textbf{(Existence of Arbitrarily Large Periodic Trajectories in the Multiparticle Model)}\label{thm:2}
For any $N\geq2$ there exist periodic trajectories in the multiparticle model $(T,I,N)$. Moreover, these trajectories can have arbitrarily large size and period.
\end{theorem}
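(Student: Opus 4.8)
The plan is to produce an explicit family of periodic orbits whose size and period grow without bound, and then to reduce the general case $N\ge 2$ to this family by adding non‑interacting particles. The starting point is the orbit of Figure~\ref{Fig:3}: two particles circulating in opposite senses around a single triangle. What makes that orbit work is that, at each of the three sites, $p_1$ and $p_2$ arrive alternately and each leaves the scatterer in exactly the orientation the other one needs on its next arrival — the two particles ``feed'' one another the correct configuration. The idea for large orbits is to build a long closed ``track'' $W$ in the lattice out of many copies of a fixed repeatable module (plus a bounded number of ``corner'' pieces so that $W$ closes up with its initial direction), every turn along $W$ being $\pm 2\pi/3$ so that $W$ is a legal particle path, and then to run two particles around $W$ in opposite directions so that this feeding mechanism is preserved along the entire track.

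Concretely I would: place $p_1$ at a site of $W$ with velocity pointing along $W$ and $p_2$ at another site of $W$ with velocity pointing the opposite way, fix the offset $\Delta t_2-\Delta t_1\in(0,1)$, and for each site of $W$ choose its initial orientation according to (i) whether the particle traversing $W$ turns left or right there and (ii) which of $p_1,p_2$ reaches it first; orientations off $W$ are irrelevant. One then checks that neither particle ever leaves $W$: at each site the visits of $p_1$ and $p_2$ interleave correctly, so the orientation the arriving particle finds is always the one it needs to stay on $W$ (for sites met once per traversal by each particle this is forced by $\Delta t_2-\Delta t_1\notin\mathbb Z$; for multiply‑visited sites it is precisely what the module must be designed to guarantee). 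Since over one full traversal of $W$ every site of $W$ is hit an even number of times in total, the configuration $C$ returns to its initial value; positions and velocities also return; hence $(T,I,N)$ is periodic and, in particular, $p_1$'s trajectory has period equal to the length $|W|$ of $W$. Letting the number of modules tend to infinity makes both the period $|W|$ and the size (number of distinct sites of $W$) grow without bound, proving the claim for $N=2$. For general $N\ge 2$ keep this pair and launch the remaining $N-2$ particles far from the bounded region swept out by $W$, in pairwise disjoint parallel strips disjoint from that region; not interacting with anything, each of them stays in its strip forever by Theorem~\ref{thm:1}, so the periodic trajectory of $p_1$ is untouched.

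The delicate point — and where the regular/irregular dichotomy from Proposition~\ref{prop:corner} enters — is the simultaneous consistency of all the scatterer orientations along $W$. A closed walk that uses only the three velocity directions available to a unit‑speed particle and that visits no site twice must be a single triangle, so every periodic orbit of period larger than $3$ necessarily revisits sites; at such a site the particle, acting alone, would find the scatterer flipped away from what it needs. The whole construction therefore hinges on designing the module and the phase $\Delta t_2-\Delta t_1$ so that at every revisited site the other particle's intervening visit restores exactly the required orientation, and on verifying that this balance is not destroyed when modules are concatenated or corner pieces are inserted — essentially, that the interleaving pattern of the two particles' visits is the same inside every module. Establishing this bookkeeping is the main work; once it is in place, the periodicity and the size/period estimates are immediate.
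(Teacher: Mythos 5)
Your overall reduction is the same as the paper's: exhibit an explicit two\hyphenchar\font=-1\relax -particle periodic family whose size and period are unbounded, then handle $N>2$ by launching the remaining $N-2$ particles in non-interacting strips (Theorem~\ref{thm:1}) away from the bounded region of the orbit. That outer layer of your argument is fine.

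The gap is in the core construction, and it is twofold. First, you never actually produce the module: you state that ``establishing this bookkeeping is the main work,'' and then stop before doing it. The entire content of the theorem is the existence of a consistently orientable closed track supporting two entangled particles, so deferring that verification leaves the proof without its substance. The paper discharges this by exhibiting a concrete family (Figure~\ref{Fig:4}) parametrized by the distance $d=\|\mathbf{h}_1-\mathbf{h}_2\|$, with size $s(d)=3d+1$ and period $\tau(d)=4+4(d-1)$, for which the interleaving can be checked directly. Second, and more seriously, the specific mechanism you propose --- two particles traversing the \emph{same} closed track $W$ in \emph{opposite} directions, generalizing the triangle of Figure~\ref{Fig:3} --- forces the two particles to have velocities in different parity classes $V_1$ and $V_2$ (reversing direction along a bond negates the velocity, which swaps $V_1$ and $V_2$). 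Such an orbit is \emph{irregular} in the sense of Proposition~\ref{prop:corner}, and the paper explicitly records as an open question whether any irregular periodic orbit can be decomposed into modules and extended ad infinitum. The extendable family the paper actually uses is \emph{regular} (matching parity), which is geometrically quite different: the two particles do not counter-rotate around a common loop, and regular orbits cannot even have corners. So your plan, as stated, rests on a construction whose feasibility is not only unverified but is singled out by the paper as unknown; you would need either to switch to a matching-parity design or to supply the irregular module explicitly.
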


\begin{proof}
Consider first the case in which $N=2$. Note that the two particles $p_1$ and $p_2$ in Figure \ref{Fig:4} have periodic trajectories of size $s(d)=3d+1$ and period $\tau(d)=4+4(d-1)$, where $d=||\mathbf{h}_1-\mathbf{h}_2||$ is the distance between the two lattice sites $\mathbf{h}_1,\mathbf{h}_2\in\mathbb{T}$. As both $\tau(d)$ and $s(d)$ are unbounded functions of the length $d$ the $(T,I,N)$ model can have periodic orbits of arbitrarily large period and size for $N=2$.

For $N>2$ suppose $p_1$ and $p_2$ have the periodic trajectories shown in Figure \ref{Fig:4}. We let the other $N-2$ particles have noninteracting trajectories that run in strips parallel to this periodic orbit (cf. Figure \ref{Fig:2}). This can be done by giving each particle the proper initial position, velocity, and configuration of scatterers in a neighborhood of its initial position as described in \cite{Grosfils99}. In this way the particles $p_1$ and $p_2$ do not interact with $p_3,\dots,p_N$. This proves the result.
\end{proof}

\begin{figure}
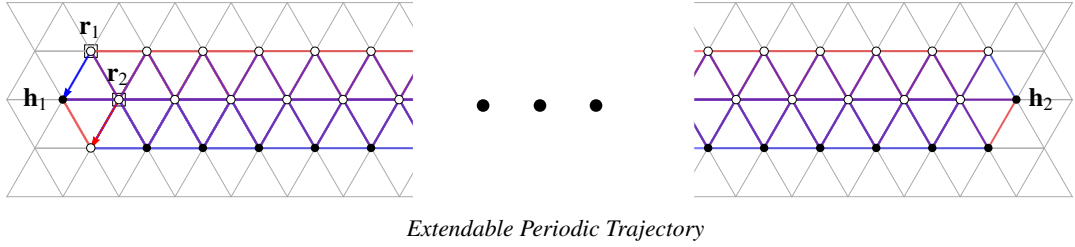

\begin{center}
	\begin{overpic}[scale=.32]{Extendable.pdf}
    \put(38,-1){\small\emph{Extendable Periodic Trajectory}}
    \put(8.5,17.25){$\mathbf{r}_1$}
    \put(11,13){$\mathbf{r}_2$}
    \put(3.5,10.75){$\mathbf{h}_1$}
    \put(94,10.75){$\mathbf{h}_2$}
    \end{overpic}
\end{center}
\caption{The periodic trajectories of two particles $p_1$ and $p_2$ are shown in blue and red, respectively. Here $\mathbf{r}_1$ and $\mathbf{r}_2$ indicate initial positions and arrows indicate initial velocities. The displacements $\Delta t_1$ and $\Delta t_2$ can be any nonequal numbers in $[0,1)$ but are shown for simplicity as $\Delta t_1=\Delta t_2=0$. The periodic obits of each point have size $s(d)=3d+1$ and period $\tau(d)=4+4(d-1)$ where $d=||\mathbf{h}_1-\mathbf{h}_2||$.}\label{Fig:4}
\end{figure}

Not all periodic trajectories have the form of the periodic trajectories shown in Figure \ref{Fig:4}. In Figure \ref{Fig:zoo} a number of much more \emph{irregular} periodic trajectories are shown. To distinguish between regular and irregular periodic trajectories we divide all possible particle velocities into two set
\[
V_1=\{(1,0),(-1/2,\sqrt{3}/2),(-1/2,-\sqrt{3}/2)\} \ \ \text{and} \ \ V_2=\{(-1,0),(1/2,-\sqrt{3}/2),(1/2,\sqrt{3}/2)\}.
\]
If a two-particle periodic orbit has particles whose velocities have \emph{matching parity}, i.e. the particles' initial velocities belong to the same set, we say that the periodic orbit is \emph{regular}. Otherwise, it is \emph{irregular}. Note that the periodic orbit in Figure \ref{Fig:4} is regular, whereas the periodic orbits in Figure \ref{Fig:zoo} are irregular.

One can show that regular orbits have a smoother boundary than irregular orbits. Specifically, irregular orbits can have \emph{corners}, which are triangles that share a side with only one other triangle in the orbit. No corners are possible in regular orbits (see Figure \ref{Fig:4} and Figure \ref{Fig:zoo}). This is formally stated in the following proposition.

\begin{prop}\label{prop:corner}
Suppose two particles form a periodic orbit in the $(T,I,2)$ model. If the particles have matching parity, the periodic orbit cannot have corners. If the two particles do not have matching parity then the orbit can have corners.
\end{prop}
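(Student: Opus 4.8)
The plan rests on one invariant of the dynamics. The matrices $R[\pm 1]$ are rotations through $\pm 2\pi/3$, and each of the sets $V_1$ and $V_2$ is carried to itself by rotation through $\pm 2\pi/3$; hence a particle's velocity never leaves the set it starts in, which I will call the particle's \emph{parity}. So ``matching parity'' persists for all time, and --- after applying a symmetry of the model (e.g.\ a point reflection of the lattice, which interchanges $V_1$ and $V_2$) if necessary --- we may assume both particles have parity $V_1$. The second ingredient is a bookkeeping device: every lattice bond can be traversed in two opposite directions, and since $V_2=-V_1$ exactly one of them lies in $V_1$, so we may orient each bond by this $V_1$-direction. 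A short check at an arbitrary lattice site shows that the three resulting incoming bonds and three outgoing bonds alternate cyclically around it. A parity-$V_1$ particle always moves along a bond in the sense of this orientation, so it enters a site along an incoming bond and leaves along an outgoing bond.

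Now suppose, for contradiction, that the two parity-$V_1$ particles form a periodic orbit containing a corner: a triangle sharing exactly one side $e$ with the rest of the orbit. Let $v$ be the vertex opposite $e$ (its \emph{tip}) and let $e_1,e_2$ be the two sides of the triangle at $v$. By the definition of a corner these lie on the boundary of the orbit, and the two triangles across $e_1$ and $e_2$ are not in the orbit. Since $e_1$ and $e_2$ are adjacent bonds at $v$, one is incoming and the other outgoing for the $V_1$-orientation; relabel so that $e_2$ is incoming and $e_1$ outgoing at $v$ (the other labelling is symmetric). As the orbit is bounded and periodic its boundary is traced by the particles, so $e_1$ and $e_2$ are traversed and $v$ is visited; moreover $v$ is visited an even, hence at least two, number of times per period, because the scatterer at $v$ must return to its orientation for the periodic trajectory through $v$ to close, and each visit flips it.

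Consider first the \emph{clean} case, in which no triangle of the orbit other than this one meets $v$, so near $v$ the orbit region is exactly that triangle. Then every passage through $v$ uses the two bonds $e_1,e_2$, and parity $V_1$ forces the particle to enter along $e_2$ and leave along $e_1$; the angle from the incoming velocity (along $e_2$ into $v$) to the outgoing velocity (along $e_1$ out of $v$) is then a fixed $\pm 2\pi/3$, so the scatterer at $v$ must be in one and the same orientation at \emph{every} visit. But each visit flips it and nothing else touches it, so two consecutive visits find it in opposite orientations --- contradicting that there are at least two visits. It remains to treat the \emph{non-clean} case, in which other triangles of the orbit also meet $v$; necessarily they occupy slots around $v$ not adjacent to this triangle's slot (an adjacent slot would force a second shared side), so the orbit region is ``pinched'' at $v$. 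I expect this to be the main obstacle. The plan is to show that with only two particles a pinch at $v$ cannot coexist with a corner: one tracks, using the alternation of incoming/outgoing bonds at $v$ together with the flipping constraint, how a particle that reaches $v$ along a boundary bond of the corner triangle may leave, and rules out the few local pictures that survive the clean-case computation --- possibly with an induction on the number of triangles in the orbit, or by observing that a genuine corner in the sense of the statement is a convex wedge of the orbit and hence never a pinch point.

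For the converse one needs only a single irregular periodic orbit that has a corner, and the orbits of Figure~\ref{Fig:zoo} already provide this, finishing the proof. The underlying reason the obstruction disappears is transparent from the above: at the tip of a corner in an irregular orbit the two particles have opposite parities, so one enters along $e_2$ and leaves along $e_1$ while the other does exactly the reverse, and these two passages demand \emph{opposite} orientations of the scatterer at $v$ --- precisely the alternation produced by the flipping between successive visits --- so no contradiction arises and the corner can persist.
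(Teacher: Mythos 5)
Your main argument coincides with the paper's. Both proofs rest on the invariance of $V_1$ and $V_2$ under rotation by $\pm 2\pi/3$, and both derive the contradiction at the tip of the corner in the same way: matching parity forces every passage through the tip to enter along the same bond and exit along the same bond, while the flipping rule together with the evenness of the number of visits per period produces two consecutive visits that cannot both follow that entry--exit pattern. (The paper phrases this as ``the second particle cannot leave along $b_2$ because the rotator has flipped''; you phrase it as ``the required orientation of the scatterer is the same at every visit but the scatterer has flipped'' --- these are the same contradiction.) Your converse, pointing to the orbits of Figure~\ref{Fig:zoo}, is also exactly what the paper does, and your closing explanation of \emph{why} opposite parities dissolve the obstruction is a nice addition the paper omits.

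The one substantive issue is your ``non-clean'' case, which you flag honestly but do not finish. Note, however, that here you are being more careful than the paper, not less: the paper's proof simply asserts that a particle can only arrive at the corner site $\mathbf{c}$ along one of the two bonds $b_1$, $b_2$ of the corner triangle, which is precisely your clean-case hypothesis that no other triangle of the orbit meets the tip. So the step you leave open is the step the paper takes for granted. As a self-contained argument your proposal is therefore incomplete --- to close it you would need either to rule out pinch points at the tip of a corner (your convex-wedge observation is the natural route, since a corner triangle contributes an interior angle of $\pi/3$ at its tip and a second orbit triangle at the same vertex would have to occupy a non-adjacent slot, changing the local boundary structure), or to show that the visits to $v$ that go around the corner triangle already suffice for the parity-and-flip contradiction regardless of what the other visits do. Measured against the paper's own proof, though, you have reproduced its entire content and correctly identified the one configuration it does not address.
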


\begin{proof}
First, note that if a particle moving at unit speed has velocity $\mathbf{v}(t)\in V_i$ then $\mathbf{v}(t+1)\in V_i$ for $i=1,2$. That is, any particle in the $(T,I,N)$ model for any $N\geq 1$ has velocities that are either in $V_1$ or $V_2$ but not both for all time. Hence, the directions a particle can move are fixed by the particle's initial velocity.

Suppose then that two particles $p_1$ and $p_2$ belong to the same periodic orbit and that these particles have matching parity. By way of contradiction, suppose that the periodic orbit has a corner. Note that a particle can only arrive at this corner $\mathbf{c}$ by traveling first along some bond $b_1$ then leaving along some other bond $b_2$ or first along $b_2$ then along $b_1$.

If particle $p_1$ arrives at $\mathbf{c}$ by moving along $b_1$ then it cannot ever arrive at $\mathbf{c}$ by moving along $b_2$. The reason is that moving along $b_1$ and $b_2$ towards $\mathbf{c}$ are velocities that belong to different sets $V_1$ and $V_2$ and particle $p_1$ can only have velocities in one. Since $p_2$ is assumed to have the same parity then $p_2$ can only ever arrive at $\mathbf{c}$ by moving along $b_1$.

Since the rotator at $\mathbf{c}$ must flip an even number of times during one period of the periodic orbit and must be visited at least once, there must be a second time a particle arrives at $\mathbf{c}$. Both the first and second time the particles must travel first down $b_1$ and leave along $b_2$. The second time, however, the rotator at $\mathbf{c}$ has flipped its orientation so this second particle cannot leave along $b_2$, a contradiction. Since there is a similar contradiction if $p_1$ arrives at $\mathbf{c}$ by moving along $b_2$, the periodic orbit cannot have a corner if the particles have the same parity.

Conversely, the periodic orbit can have corners if the particles do not have matching parity as can be seen in Figure \ref{Fig:zoo}.
\end{proof}

It is currently unknown whether any irregular periodic orbits can be decomposed and extended ad infinitum similar to the regular periodic trajectories in Figure \ref{Fig:4}. Moreover, although we have done extensive numerical testing, no tangles involving more than two particles have been found. It is therefore unknown whether the multiparticle model $(T,I,N)$ can have periodic dynamics if $N$ is odd. If $N$ is even then it is possible, for instance, to create $N/2$ nonoverlapping copies of the orbit shown in Figure \ref{Fig:4}, which results in periodic dynamics of the entire system.

We note that by slightly weakening the definition of periodicity it is possible to refer to parts of aperiodic systems as periodic. To make this precise we write $\Omega\subset T$ if $\Omega$ is a subset of the triangle lattice consisting of some subset of lattice sites and the bonds between them. If $\Omega$ has only a finite number of lattice sites we say $\Omega$ is \emph{finite} or \emph{bounded}.

\begin{definition}\textbf{(Local Periodicitiy)}
Let $\Omega\subset T$. If the restriction $(\Omega,I,N)$ of the multiparticle model $(T,I,N)$ to $\Omega$ is periodic then we say $(T,I,N)$ is \emph{locally periodic} on $\Omega$.
\end{definition}

\begin{figure}
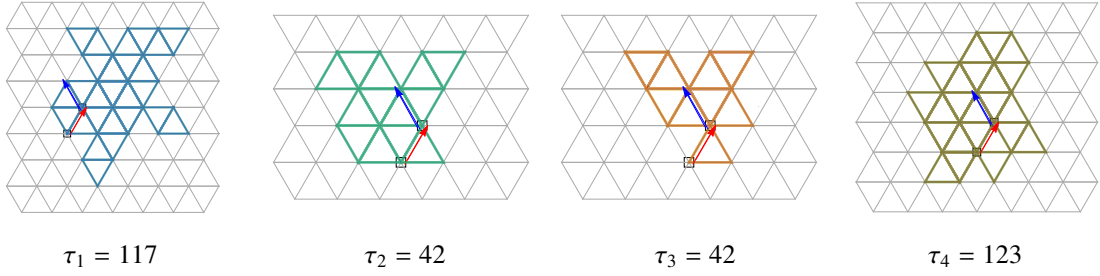

\begin{center}
\begin{tabular}{c}
	\begin{overpic}[scale=.45]{PeriodicZoo.pdf}
    \put(6,-3){$\tau_1=117$}
    \put(33,-3){$\tau_2=42$}
    \put(59,-3){$\tau_3=42$}
    \put(83.5,-3){$\tau_4=123$}
    \end{overpic}
\end{tabular}
\end{center}
\caption{Four irregular periodic orbits consisting of two particles each are shown with periods $\tau_i$ for $i=1,2,3,4$. In each figure the initial position and velocity of the particle's are indicated by a red and blue arrow, respectively.}\label{Fig:zoo}
\end{figure}

It is relatively straightforward to create a multiparticle model $(T,I,N)$ which is locally periodic for some $\Omega\subset T$ but overall aperiodic. For instance, combining the trajectories in Figures \ref{Fig:2} and \ref{Fig:4} in a nonoverlapping way results in such a system where if we let $\Omega$ contain the two particle periodic orbit we have local periodicity. However, if we slightly modify our multiparticle model by allowing an infinite number of particles it is possible to have a system in which each trajectory is periodic but the overall system is aperiodic.

\begin{example}
Consider the infinite multiparticle model $(T,I,\infty)$ in which we have $N=\infty$ particles. Suppose each particle is part of a pair of entangled periodic particles of the form shown in Figure \ref{Fig:4}, none of these pairs interact with any other particles, and there is no largest periodic orbit. Then each trajectory is periodic but the system taken as a whole is aperiodic since there is no $\tau<\infty$ such that $\bar{\mathbf{r}}(t)=\bar{\mathbf{r}}(t+\tau)$ for $t\geq 0$. In fact, although this model is aperiodic, if $\Omega$ is any subset of $T$ that is the union of a finite number of these periodic orbits then $(T,I,N)$ is locally periodic on $\Omega$.
\end{example}

It is also worth mentioning that not all entangled orbits are necessarily periodic. For instance, the two particle periodic orbit in Figure \ref{Fig:4} can be modified such that the particles $p_1$ and $p_2$ move to the right indefinitely. Although neither of the particles have a periodic trajectory in this situation the two particles are entangled for all time. In fact, much more complicated entanglements than this can be found in the $(T,I,N)$ model. (We save the analysis of these more complex entanglements for a latter paper.)

In the following section we more closely investigate what it means for a system to be aperiodic and what consequences this has on the past and future of the system.

\section{Aperiodic Behavior in the Multiparticle Model}\label{sec:4}

To understand what happens if a particle's trajectory is not periodic we first note that the dynamics of the particles $p_1,\dots,p_N$ are \emph{invertible}, i.e. the system's equations of motion are time-reversible. We can use this time-reversibility to show that the particles in the $(T,I,N)$ model each have a periodic trajectory if and only if they stay in a finite subset $\Omega$ of the (infinite) lattice $H$ for all time.

To show that each particle's equations of motion  \eqref{eq:1}--\eqref{eq:3} can be reversed, we note that these equations describe the $i$th particle's motion in \emph{forward time}. That is, given $\mathbf{r}_i(t+\triangle t_j)$, $\mathbf{v}_i(t+\triangle t_j)$, and $C(t+\triangle t_j)$ for $i=1,2,\dots,N$ we can compute each of these quantities at time $t+\triangle t_{j+1}$. In the following proposition, these quantities are shown to exist in \emph{reverse time}, i.e. given $\mathbf{r}_i(t+\triangle t_i)$, $\mathbf{v}_i(t+\triangle t_i)$, and $C(t+\triangle t_i)$ these quantities can be found at time $t+\triangle t_{j-1}$, (see equations \eqref{eq:1.1}--\eqref{eq:3.1}). The fact that these equations exist implies that, the particle's motion is time-reversible. Therefore, we can recover the state of the multiparticle model at any time $t<0$ if we know the system's initial state. This is summarized in the following proposition.

\begin{prop}\textbf{(Time-Reversability)}\label{prop:1}
For the initial condition $I=(\bar{\mathbf{r}},\bar{\mathbf{v}},C)$, the particle $p_i$ in the $(T,I,N)$ model has the time-reversed equations of motion given by
\begin{align}
\mathbf{r}_i(t+\triangle t_{j-1})&=\mathbf{r}_i(t+\triangle t_{j})-(\triangle t_{j}-\triangle t_{j-1})\mathbf{v}_i(t+\triangle t_{j})\label{eq:1.1}\\
\mathbf{v}_i(t+\triangle t_{j-1})&=
\begin{cases}
R^{-1}[C(t+\triangle t_{i},\mathbf{r}_i(t+\triangle t_{i}))]\mathbf{v}_i(t+\triangle t_{i+1}) & \ \ \text{if} \ \ j=i+1\\
\mathbf{v}_i(t+\triangle t_{j}) & \ \ \text{otherwise}\label{eq:2.1}
\end{cases}\\
C(t+\triangle t_{j-1},\mathbf{h})&=
\begin{cases}
-C(t+\triangle t_{j},\mathbf{h}) &  \ \ \text{if} \ \  \mathbf{h}=\mathbf{r}_{j-1}(t+\triangle t_{j-1})\\
\hspace{0.1in} C(t+\triangle t_{j},\mathbf{h}) & \ \ \text{otherwise}\label{eq:3.1}
\end{cases}
\end{align}
for $t\in\mathbb{Z}$, $1\leq i\leq N$, and $0\leq j\leq N$ where $\triangle t_{-1}=-1+\triangle t_N$. Here
\begin{align*}
\mathbf{r}_i(t+\triangle t_i)&=\mathbf{r}_i(t+\triangle t_{i+1})-(\triangle t_{i+1}-\triangle t_{i})\mathbf{v}_i(t+\triangle t_{i+1})\\
\mathbf{r}_{j-1}(t+\triangle t_{j-1})&=\mathbf{r}_{j-1}(t+\triangle t_j)-(\triangle t_j-\triangle t_{j-1})\mathbf{v}_i(t+\triangle t_j)
\end{align*}
in \eqref{eq:2.1} and \eqref{eq:3.1}, respectively, so that each quantity at time $t+\triangle t_{k-1}$ is given in terms of quantities from one step in the future at time $t+\triangle t_{k}$.
\end{prop}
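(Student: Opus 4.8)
The claim is essentially that the map $\Phi$ advancing the state $\big(\bar{\mathbf r}(t+\triangle t_j),\bar{\mathbf v}(t+\triangle t_j),C(t+\triangle t_j)\big)$ one ``micro-step'' to time $t+\triangle t_{j+1}$ via \eqref{eq:1}--\eqref{eq:3} is a bijection, with the stated formula for its inverse. The plan is to verify directly that the right-hand sides of \eqref{eq:1.1}--\eqref{eq:3.1} invert \eqref{eq:1}--\eqref{eq:3}, treating the two cases $j=i+1$ (the step that scatters $p_i$) and $j\neq i+1$ separately, exactly as the forward equations do.

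First I would set up bookkeeping: a micro-step takes the index $j$ (the particle whose turn it is) and updates positions of all particles by a straight segment of length $\triangle t_{j+1}-\triangle t_j$, rotates only $\mathbf v_j$ (at the moment $j$ lands on its site), and flips only the scatterer at $\mathbf r_j$. So to reverse one micro-step I need to (i) recover the scatterer configuration, (ii) recover the velocities, (iii) recover the positions. For positions, \eqref{eq:1.1} is immediate: each particle moved linearly with the (post-scattering, hence now known) velocity $\mathbf v_i(t+\triangle t_j)$ over the segment $\triangle t_j-\triangle t_{j-1}$, so subtracting that displacement returns $\mathbf r_i(t+\triangle t_{j-1})$; one must note that for the scattered particle the relevant velocity is the post-rotation one, which is why the auxiliary formulas for $\mathbf r_i(t+\triangle t_i)$ and $\mathbf r_{j-1}(t+\triangle t_{j-1})$ are recorded — these just express the segment endpoints consistently. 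For velocities, the only nontrivial case is $j=i+1$, where the forward step applied $R[C(t+\triangle t_i,\mathbf r_i(t+\triangle t_i))]$; since $R[z]$ in \eqref{eq:matrix} is an orthogonal rotation matrix it is invertible, and $R^{-1}[z]=R[-z]=R[z]^{\mathsf T}$, so $\mathbf v_i(t+\triangle t_{j-1})=R^{-1}[\cdot]\mathbf v_i(t+\triangle t_{i+1})$ recovers the pre-scattering velocity; in all other cases the velocity was unchanged. For the configuration, \eqref{eq:3} flipped exactly the site $\mathbf r_{j-1}(t+\triangle t_{j-1})$ (relabelled here since the forward step indexed by $j-1$ is the one being undone), and since a flip $C\mapsto -C$ is an involution on $\{-1,1\}$, applying $-C$ again at that site and leaving all others fixed recovers the previous configuration.

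The main subtlety — and the only place real care is needed — is the index arithmetic and the matching of ``which velocity'' is used in ``which displacement'': the scatterer at a site must be read off using the particle's position and the orientation at the correct instant, and when reversing, the site whose scatterer flips is $\mathbf r_{j-1}(t+\triangle t_{j-1})$, which itself must first be reconstructed from future data via the auxiliary position formula. So the natural order of operations in the proof is: (1) reconstruct $\mathbf r_i(t+\triangle t_{j-1})$ for all $i$ from \eqref{eq:1.1} and the auxiliary formulas; (2) reconstruct the velocities from \eqref{eq:2.1} using invertibility of $R[z]$; (3) reconstruct $C(t+\triangle t_{j-1})$ from \eqref{eq:3.1} using that the flip is an involution, now that the flipping site is known from step (1). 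Composing, $\Phi^{-1}\circ\Phi = \mathrm{id}$ and $\Phi\circ\Phi^{-1}=\mathrm{id}$ follow by substitution. One should also check the boundary convention $\triangle t_{-1}=-1+\triangle t_N$ is the correct ``wrap-around'' matching $\triangle t_{N+1}=1$ in the forward equations, i.e. that stepping $j$ down past $0$ corresponds to decrementing the integer time $t$ by $1$; this is a consistency check rather than a difficulty. Having established the inverse map exists and is given by \eqref{eq:1.1}--\eqref{eq:3.1}, time-reversibility on all of $\mathbb Z$ follows by iteration, which is what the proposition asserts.
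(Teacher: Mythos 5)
Your proposal is correct and follows essentially the same route as the paper's proof: a direct case-by-case verification that composing the forward equations \eqref{eq:1}--\eqref{eq:3} with the reversed equations \eqref{eq:1.1}--\eqref{eq:3.1} returns the present state, using the invertibility of the rotation matrix $R[z]$ for the velocities and the involutive nature of the flip $C\mapsto -C$ for the configuration. The only differences are cosmetic (you additionally note the wrap-around index convention and the order in which the flipping site must be reconstructed, which the paper leaves implicit).
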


The proof of this proposition is based on the observation that in the $(T,I,N)$ model, if the $i$th particle's velocity $\mathbf{v}(t+\triangle t_j)$ is known at time $t+\triangle t_j$, then there are only three possibilities for what the particle's velocity could have been at the previous time $t+\triangle t_{j-1}$. If the particle does not encounter a scatter at time $t+\triangle t_{j-1}$, i.e. $i\neq j-1$, then its velocity is unchanged. If it encounters a right rotator at time at time $t+\triangle t_{j-1}$, then the particle's velocity $\textbf{v}(t+\triangle t_{j-1})$ will be one of the the other two possibilities. If the particle encounters a left rotator at time $t+\triangle t_{j-1}$ then $\mathbf{v}(t+\triangle t_{j-1})$ will be the third possibility. Since it is possible to uniquely recover $\mathbf{v}(t+\triangle t_{j-1})$, based on the type of scatterer the particle encounters at this time, it is possible to uniquely determine the particles position $\mathbf{r}_i(t+\triangle t_{j-1})$. Therefore, it is possible not only to know the particle's future trajectory but also its past. A proof of proposition \ref{prop:1} is the following.

\begin{proof}
To prove the proposition note that by substituting $j+1$ for $j$ in Equations \eqref{eq:1.1}--\eqref{eq:3.1} results in the equations of motion
\begin{align}
\mathbf{r}_i(t+\triangle t_{j})&=\mathbf{r}_i(t+\triangle t_{j+1})-(\triangle t_{j+1}-\triangle t_{j})\mathbf{v}_i(t+\triangle t_{j+1})\label{eq:1.2}\\
\mathbf{v}_i(t+\triangle t_{j})&=
\begin{cases}
R^{-1}[C(t+\triangle t_{i},\mathbf{r}_i(t+\triangle t_{i}))]\mathbf{v}_i(t+\triangle t_{i+1}) & \ \ \text{if} \ \ j=i\\
\mathbf{v}_i(t+\triangle t_{j+1}) & \ \ \text{otherwise}\label{eq:2.2}
\end{cases}\\
C(t+\triangle t_{j},\mathbf{h})&=
\begin{cases}
-C(t+\triangle t_{j+1},\mathbf{h}) &  \ \ \text{if} \ \  \mathbf{h}=\mathbf{r}_{j}(t+\triangle t_{j})\\
\hspace{0.1in} C(t+\triangle t_{j+1},\mathbf{h}) & \ \ \text{otherwise}\label{eq:3.2}
\end{cases}
\end{align}
for $t\in\mathbb{Z}$, $1\leq i\leq N$, and $-1\leq j\leq N-1$. The goal is to show that evolving each of the particle's position, velocity, and the configuration of scatterers on the lattice first forward in time by Equations \eqref{eq:1}--\eqref{eq:3} then back in time by Equations \eqref{eq:1.2}--\eqref{eq:3.2} results in these quantities at the present time.

Going in this order, by inserting $\mathbf{r}_i(t+\triangle t_{j+1})$ and $\mathbf{v}_i(t+\triangle t_{j+1})$ from Equations \eqref{eq:1} and \eqref{eq:2}, respectively, into right-hand side of Equation \eqref{eq:1.2} for the case $j=i$ yields the equation
\begin{align*}
\mathbf{r}_i(t+\triangle t_i)=\mathbf{r}_i(t+\triangle t_{i})&+(\triangle t_{i+1}-\triangle t_{i})R[C(t+\triangle t_{i},\mathbf{r}_i(t+\triangle t_{i}))]\mathbf{v}_i(t+\triangle t_{i})\\
&-(\triangle t_{i+1}-\triangle t_{i})R[C(t+\triangle t_{i},\mathbf{r}_i(t+\triangle t_{i}))]\mathbf{v}_i(t+\triangle t_{i})\\
&=\mathbf{r}_i(t+\triangle t_i).
\end{align*}
Hence, by evolving the $i$th particle's position first forward in time using Equations \eqref{eq:1}--\eqref{eq:3} then backward in time using Equations \eqref{eq:1.1}--\eqref{eq:3.1} as given by Equations \eqref{eq:1.2}--\eqref{eq:3.2} we recover the particle's present state. For the case $j\neq i$ we similarly have
\begin{align*}
\mathbf{r}_i(t+\triangle t_j)=\mathbf{r}_i(t+\triangle t_{j})&+(\triangle t_{j+1}-\triangle t_{j})\mathbf{v}_i(t+\triangle t_{j})\\
&-(\triangle t_{j+1}-\triangle t_{j})\mathbf{v}_i(t+\triangle t_{j})\\
&=\mathbf{r}_i(t+\triangle t_j).
\end{align*}

To verify the same for the particle's velocity we insert $\mathbf{v}_i(t+\triangle t_{i+1})$ from Equation \eqref{eq:2} into the right-hand side of Equation \eqref{eq:2.2}. For the case $j=i$ this yields
\begin{align*}
\mathbf{v}_i(t+\triangle t_i)&=R^{-1}[C(t+\triangle t_{i},\mathbf{r}_i(t+\triangle t_{i}))]R[C(t+\triangle t_{i},\mathbf{r}_i(t+\triangle t_{i}))]\mathbf{v}_i(t+\triangle t_{i})\\
&=\mathbf{v}_i(t+\triangle t_i).
\end{align*}
For $j\neq i$, $\mathbf{v}_i(t+\triangle t_j)=\mathbf{v}_i(t+\triangle t_{j+1})=\mathbf{v}_i(t+\triangle t_j)$ where the first equality follows from Equation \eqref{eq:2.2} and the second from Equation \eqref{eq:2}. Hence, evolving the particle's velocity forward in time then back results in the particle's present velocity.

To verify that this property also holds for the model's configuration of scatterers we insert Equation \eqref{eq:3} into the right-hand side of \eqref{eq:3.2}. This yields
\begin{align*}
C(t+\triangle t_{j},\mathbf{h})
&=\begin{cases}
-C(t+\triangle t_{j+1},\mathbf{h}) &  \ \ \text{if} \ \  \mathbf{h}=\mathbf{r}_{j}(t+\triangle t_{j})\\
\hspace{0.1in} C(t+\triangle t_{j+1},\mathbf{h}) & \ \ \text{otherwise}
\end{cases}\\
&=\begin{cases}
-(-C(t+\triangle t_{j},\mathbf{h})) &  \ \ \text{if} \ \  \mathbf{h}=\mathbf{r}_{j}(t+\triangle t_{j})\\
\hspace{0.1in} C(t+\triangle t_{j},\mathbf{h}) & \ \ \text{otherwise}
\end{cases}=C(t+\triangle t_{j},\mathbf{h})
\end{align*}
where the first equality follows from Equation \eqref{eq:3.2} and the second from Equation \eqref{eq:3}. As before, evolving the scatterer's configuration at lattice site $\mathbf{h}$ first forward then backward in time recovers the scatterer's present orientation. This implies that the system's equations of motion \eqref{eq:1}--\eqref{eq:3} are time reversible and are given by Equations \eqref{eq:1.2}--\eqref{eq:3.2}.
\end{proof}

Suppose each particle in the $(T,I,N)$ model remains in a subset $\Omega\subset T$ for all $t\geq 0$. If $\Omega$ is finite we say that the trajectory of each particle is \emph{bounded} for $t\geq 0$. If this is the case there must be two integer-valued times $0\leq t_1<t_2$ at which $\bar{\mathbf{r}}(t_1)=\bar{\mathbf{r}}(t_2)$, $\bar{\mathbf{v}}(t_1)=\bar{\mathbf{v}}(t_2)$, and $C(t_1)=C(t_2)$. This is because of the discrete nature of the lattice. Each particle $p_i$ can only assume a finite number of positions along the sites and bonds of $\Omega$ as each particle moves with unit speed and we only consider times $t+\Delta t_i$ for $i=1,2,\dots,N$ and $t\in\mathbb{N}$. Similarly, there are only a finite number of velocities the particle can have and there are only a finite number of scattering configurations possible on $\Omega$. Therefore, at some first time $t_2$ each of the particle's position, velocities, and the configuration of scatterers on $\Omega$ must be the same as at some previous point in time $t_1<t_2$. Hence, the system's behavior must be periodic for all time $t\geq t_1$ with period $\tau=t_2-t_1>0$. However, this behavior may be only eventually periodic.

Formally, the multiparticle model's motion is said to be \emph{eventually periodic} with \emph{period} $\tau<\infty$, if there is a $t_*>0$, such that
\[
\bar{\mathbf{r}}(t)=\bar{\mathbf{r}}(t+\tau), \ \bar{\mathbf{v}}(t)=\bar{\mathbf{v}}(t+\tau), \ \text{and} \ C(t)=C(t+\tau) \ \text{for all} \ t\geq t_*.
\]
Importantly, if $t_*=0$, we do not consider the system's behavior to be eventually periodic, since it is then \emph{periodic} with period $\tau$.

However, since the dynamics of the $(T,I,N)$ model is time-reversible by Proposition \ref{prop:1}, then it dynamics cannot be eventually periodic only periodic (see for instance \cite{Brin2002}). This implies that if the trajectory of each particle in the $(T,I,N)$ model is bounded then the system's dynamics are periodic.

Being periodic has additional consequences as $(T,I,N)$ is time reversible. That is, if the $(T,I,N)$ model is periodic, i.e. Equation \eqref{eq:forward} holds for all $t\geq 0$, then this equation also holds for $t<0$ so that the system is \emph{periodic for all time}. The reason is that if at some time $t<0$ the systems dynamics are not periodic but become periodic at time $t=0$ then the system's dynamics are eventually periodic, which is not possible.

Since being periodic for all time implies that the trajectory of each particle is \emph{bounded for all time}, i.e. each particle stays in a finite set $\Omega$ for all time $t\in(-\infty,\infty)$, this implies the following result.

\begin{prop}\label{prop:1}\textbf{(Equivalence of Boundedness and Periodicity)}
The following are equivalent:\\
(i) The trajectory of each particle in the $(T,I,N)$ model is bounded for $t\geq 0$.\\
(ii) The trajectory of each particle in the $(T,I,N)$ model is bounded for all time $t\in(-\infty,\infty)$.\\
(iii) The $(T,I,N)$ model is periodic for $t\geq 0$.\\
(iv) The $(T,I,N)$ model is periodic for all time $t\in(-\infty,\infty)$.
\end{prop}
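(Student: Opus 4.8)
The plan is to establish the four-cycle of implications \textup{(i)} $\Rightarrow$ \textup{(iii)} $\Rightarrow$ \textup{(iv)} $\Rightarrow$ \textup{(ii)} $\Rightarrow$ \textup{(i)}, which makes all four statements equivalent. Three of these links are one-line observations; the only step with genuine content is \textup{(i)} $\Rightarrow$ \textup{(iii)}, where the time-reversibility already proved for the $(T,I,N)$ model is exactly the tool that upgrades \emph{merely} eventual periodicity to periodicity starting from time $0$.

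For \textup{(i)} $\Rightarrow$ \textup{(iii)}: assume each $p_i$ stays in a finite region $\Omega\subset T$ for all $t\ge 0$. By \eqref{eq:3} only the scatterers at sites of $\Omega$ can ever flip, so $C(t)$ differs from $C(0)$ on at most finitely many sites; combining this with the fact that each particle has only finitely many admissible positions along the sites and bonds of $\Omega$ at the relevant times $t+\triangle t_i$ and only finitely many velocities, the \emph{state} $\sigma(t)=\bigl(\bar{\mathbf r}(t),\bar{\mathbf v}(t),C(t)|_{\Omega}\bigr)$, recorded at integer times $t\in\mathbb N$, takes values in a finite set — this is precisely the counting argument in the paragraph preceding the statement. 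Hence there are integers $0\le t_1<t_2$ with $\sigma(t_1)=\sigma(t_2)$. Let $\Phi$ be the map advancing the state by one unit of time (a composition of the $N$ elementary per-particle scattering-and-drift steps appearing in \eqref{eq:1}--\eqref{eq:3}); by Proposition (Time-Reversability) each elementary step, hence $\Phi$ itself, is a bijection, with inverse given by \eqref{eq:1.2}--\eqref{eq:3.2}. Then $\Phi^{t_1}(\sigma(0))=\sigma(t_1)=\sigma(t_2)=\Phi^{t_2}(\sigma(0))$, and applying $\Phi^{-t_1}$ gives $\sigma(0)=\Phi^{\,t_2-t_1}(\sigma(0))$, i.e.\ $\sigma(t)=\sigma(t+(t_2-t_1))$ for every $t\in\mathbb N$. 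Since on each interval between consecutive relevant times the motion is linear with velocity fixed by the current state (equation \eqref{eq:1}), agreement of the states at the endpoints forces $\bar{\mathbf r}(t)=\bar{\mathbf r}(t+(t_2-t_1))$, $\bar{\mathbf v}(t)=\bar{\mathbf v}(t+(t_2-t_1))$, and $C(t)=C(t+(t_2-t_1))$ for \emph{all} real $t\ge 0$; this is \eqref{eq:forward}, so \textup{(iii)} holds. The global invertibility of $\Phi$ is exactly what lets us cancel the prefix $\Phi^{t_1}$ and conclude periodicity from $t=0$, rather than mere eventual periodicity.

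The remaining links are short. For \textup{(iii)} $\Rightarrow$ \textup{(iv)}: with $\Phi$ invertible and $\sigma(0)=\Phi^{\tau}(\sigma(0))$ from \textup{(iii)}, for any $t\in\mathbb Z$ we have $\sigma(t+\tau)=\Phi^{t}\bigl(\Phi^{\tau}(\sigma(0))\bigr)=\Phi^{t}(\sigma(0))=\sigma(t)$, and the same interpolation between relevant times as above promotes this to \eqref{eq:forward} for all $t\in(-\infty,\infty)$; equivalently, if the dynamics were periodic for $t\ge 0$ but not periodic at some $t<0$ they would be eventually periodic, which time-reversibility forbids. For \textup{(iv)} $\Rightarrow$ \textup{(ii)}: periodicity for all time makes $\{\bar{\mathbf r}(t):t\in\mathbb R\}$ equal to its restriction to the finitely many relevant times in one period, hence finite, so each particle stays in a finite $\Omega$ for all time. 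Finally \textup{(ii)} $\Rightarrow$ \textup{(i)} is immediate by restricting to $t\ge 0$. (One also gets \textup{(iii)} $\Rightarrow$ \textup{(i)} for free by the same finiteness observation, so every pair is linked.)

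I expect the only real obstacle to be the \textup{(i)} $\Rightarrow$ \textup{(iii)} step, and within it the bookkeeping needed to (a) compress the infinite configuration $C$ into the \emph{finite} data $C(t)|_{\Omega}$, (b) phrase the evolution through the discrete times $t+\triangle t_i$ as one honest invertible map $\Phi$ whose inverse is supplied by Proposition (Time-Reversability), and (c) interpolate between relevant times so that periodicity of $\sigma$ at integer times yields periodicity of $\bar{\mathbf r},\bar{\mathbf v},C$ at all real times. Everything else — the pigeonhole, the prefix cancellation, and the three trivial implications — is routine.
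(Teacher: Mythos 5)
Your proposal is correct and follows essentially the same route as the paper: the pigeonhole/finiteness argument on the state $\bigl(\bar{\mathbf r},\bar{\mathbf v},C|_{\Omega}\bigr)$ to get eventual periodicity, then time-reversibility (which the paper invokes via ``an invertible system cannot be merely eventually periodic,'' and which you unpack explicitly by cancelling $\Phi^{t_1}$) to upgrade this to periodicity from $t=0$ and to extend it to all $t\in(-\infty,\infty)$, with the remaining implications immediate. Your explicit handling of $C(t)|_{\Omega}$ versus the unchanged configuration off $\Omega$ is a minor but welcome tightening of the paper's counting step.
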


\begin{figure}
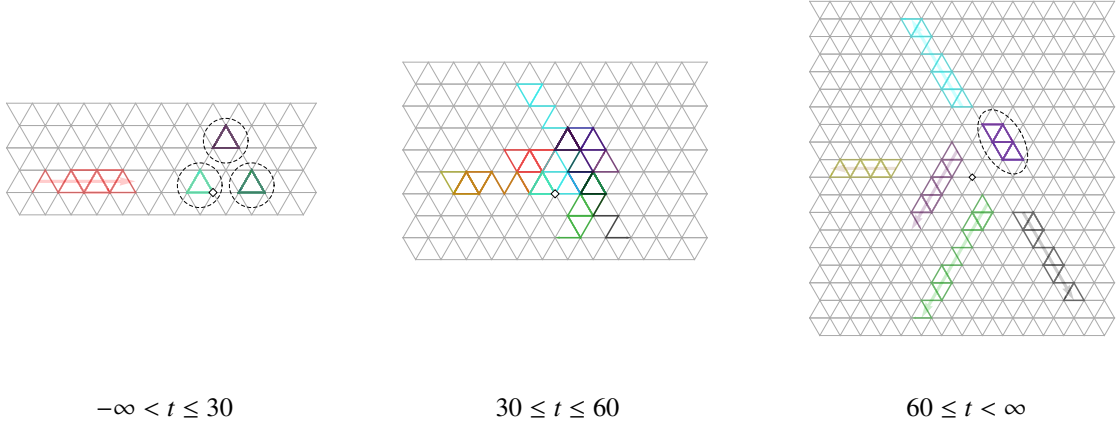

\begin{center}
\begin{overpic}[scale=.35]{MultiparticleGraphics1.pdf}
\put(9,-3){$-\infty < t \leq 30$}
\put(44,-3){$30 \leq t\leq 60$}
\put(80,-3){$60\leq t <\infty$}
\end{overpic}
\end{center}
\caption{Left: A particle, shown in red, with an unbounded past approaches three pairs of particles with periodic orbits indicated by dashed lines of the type shown in Figure \ref{Fig:3}. Center: The red particle collides with the light blue periodic orbit causing a chain reaction in which the trajectory of all particles is effected. Right: The yellow, purple, blue, green, and black particles escape off to infinity. The remaining two particles, which include the red particle with the unbounded past, become entangled in a periodic orbit indicated by the dashed line.}\label{Fig:5}
\end{figure}

It is unknown whether a single particle in the $(T,I,N)$ model with $N>1$ can have an aperiodic trajectory but also be bounded. The reason is that it may be possible for a particle to remain in a bounded region but sequentially interact with an aperiodic particle over longer and longer time intervals and thereby inherit the other particle's aperiodicity.

An important consequence of the previous proposition is that if a single particle has an unbounded trajectory in backward (forward) time then some other particle has an unbounded trajectory in forward (backward) time. The reason is that if some particle has an unbounded past, for instance, then it cannot have a bounded future as this would violate Proposition \ref{prop:1}. Hence, some possibly different particle must have an unbounded future.

This is summarized in the following theorem.

\begin{theorem}\label{thm:un}\textbf{(Unbounded Past and Future)}
If some particle in the $(T,I,N)$ model has an unbounded past (future) then some particle in the model has an unbounded future (past).
\end{theorem}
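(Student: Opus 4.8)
The statement is essentially the contrapositive of Proposition \ref{prop:1}, so the plan is to deduce it directly from that equivalence. First I would prove the forward direction: assume some particle $p_k$ has an unbounded past, i.e. the set $\{\mathbf{r}_k(t+\triangle t_k):t\in\mathbb{Z},\ t<0\}$ is not contained in any finite $\Omega\subset T$. I want to conclude that some particle (possibly different from $p_k$) has an unbounded future. Suppose, toward a contradiction, that every particle has a bounded future, i.e. condition (i) of Proposition \ref{prop:1} holds. Then by the equivalence in that proposition, condition (ii) holds as well, so every particle --- in particular $p_k$ --- is bounded for all time $t\in(-\infty,\infty)$. This contradicts the assumed unboundedness of $p_k$'s past. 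Hence at least one particle has an unbounded future.

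The reverse direction (unbounded future implies some particle has an unbounded past) is symmetric, and here I would invoke the time-reversibility established in Proposition \ref{prop:1} (the time-reversed equations \eqref{eq:1.2}--\eqref{eq:3.2}): running the $(T,I,N)$ model backward is again an instance of the same model with the same form of equations of motion, so the roles of ``past'' and ``future'' may be exchanged and the forward-direction argument applies verbatim to the time-reversed system. Alternatively one can give the same contradiction argument directly: if some particle has an unbounded future but every particle has a bounded past, then (ii) fails while (i)$'$ (boundedness for $t\le 0$) holds, and one runs the equivalence of Proposition \ref{prop:1} through the time-reversed model to force every particle bounded for all time, contradicting the unbounded future.

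The only genuine content beyond quoting Proposition \ref{prop:1} is making sure the equivalence there is stated symmetrically enough to be applied to the time-reversed dynamics; since the proof of that proposition rests on the discreteness/finiteness pigeonhole argument together with the exclusion of merely-eventually-periodic behavior via time-reversibility, and both ingredients are insensitive to the direction of time, this causes no difficulty. I expect no real obstacle: the theorem is a short logical consequence, and the main thing to be careful about is bookkeeping of which of the four equivalent conditions is being negated in each direction and confirming that ``unbounded past of one particle'' is incompatible with ``every particle bounded for all time,'' which is immediate from the definition of boundedness for all time. Thus the proof is a two-line deduction in each direction.
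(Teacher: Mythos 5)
Your argument is correct and is essentially identical to the paper's: the authors likewise deduce the theorem as an immediate consequence of Proposition \ref{prop:1} (Equivalence of Boundedness and Periodicity), arguing that an unbounded past is incompatible with every particle having a bounded future since boundedness for $t\geq 0$ forces boundedness for all $t\in(-\infty,\infty)$. Your extra care about applying the equivalence symmetrically to the time-reversed system is a reasonable refinement of what the paper leaves implicit.
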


It is worth noting that a particle in the $(T,I,N)$ model can have both an aperiodic past and a periodic future or an aperiodic future and a periodic past. This is demonstrated in the following example that illustrates Theorem \ref{thm:un}.

\begin{example}
In Figure \ref{Fig:5} (left) a particle with an unbounded past, shown in red, collides with a number of particles with periodic trajectories. In the process the particle becomes entangled in a new periodic orbit with one of the other six particles. Since the red particle has an unbounded past then Theorem \ref{thm:un} implies that at least one of the other five particles must have an unbounded future. In fact, all other particles escape to infinity as can be seen in Figure \ref{Fig:5} (right).
\end{example}

It is also worth noting that Proposition \ref{prop:1} and Theorem \ref{thm:un} hold for any multiparticle lattice system in which the equations of motion are time reversible. That is, the fact that at least one particle in the system inherits an unbounded future from some particle's unbounded past is a feature shared by any such model. This is worth emphasizing as there are few rigorous results for multiparticle lattice systems with an arbitrary number of particles.

\section{Multiparticle Model for Particles with Different Speeds}\label{sec:5}
In the previous sections each of the particles in the $(T,I,N)$ model is considered to move at unit speed or more generally at the same speed. Here, we consider the case in which the particles can move at different but fixed speeds. That is, each particle $p_i$ is given the initial velocity $\mathbf{v}_i\in\mathbb{R}^2$ where $||\mathbf{v}_i(t)||=||\mathbf{v}_i||>0$ for all $t$ at which $\mathbf{v}_i(t)$ exists.



To understand to what degree we might expect particles with varying speeds to have periodic verses aperiodic behavior, we note the following. If we were to randomly choose the particles' speed in the $(T,I,N)$ model based on some probability measure on $[0,\infty)$ that is absolutely continuous with respect to Lebesgue measure then the probability of a periodic trajectory will be zero. The reason is the following result.

\begin{prop}\label{prop:3}
In the $(T,I,N)$ model with $N\geq2$ if $||\mathbf{v}_i||/||\mathbf{v}_j||$ is irrational for $i\neq j$ then particles $p_i$ and $p_j$ cannot be part of the same periodic orbit. Consequently, if $||\mathbf{v}_i||/||\mathbf{v}_j||$ is irrational for all $1\leq i,j,\leq N$ then each particle has an (unbounded) aperiodic trajectory.
\end{prop}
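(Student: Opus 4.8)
The plan is to reduce the statement to two facts about a particle with a periodic trajectory, and then to the triviality that a product and quotient of rationals is rational. Fact one: if $p_i$'s trajectory is periodic with minimal period $\tau_i$, then $\tau_i\|\mathbf v_i\|\in\mathbb Z_{>0}$, because $p_i$ moves at constant speed $\|\mathbf v_i\|$ along bonds of unit length, hence sits at a lattice site exactly at the times $\Delta t_i+m/\|\mathbf v_i\|$, $m\in\mathbb N$, so the return time $\tau_i$ must be one of the gaps $m/\|\mathbf v_i\|$. Fact two, the real content: if $p_i$ and $p_j$ both have periodic trajectories and interact infinitely often (in particular if they lie on a common periodic orbit), then $\tau_i/\tau_j\in\mathbb Q$. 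Granting these, write $\tau_i\|\mathbf v_i\|=k_i$, $\tau_j\|\mathbf v_j\|=k_j$ with $k_i,k_j\in\mathbb Z_{>0}$ and $\tau_i/\tau_j=p/q$ with $p,q\in\mathbb Z_{>0}$; then $\|\mathbf v_i\|/\|\mathbf v_j\|=(k_i/k_j)(\tau_j/\tau_i)=k_iq/(k_jp)\in\mathbb Q$, contradicting the hypothesis, so $p_i$ and $p_j$ cannot share a periodic orbit. I note that if one reads ``common periodic orbit'' as ``the restriction of the model to the finitely many sites they visit is periodic with some period $\tau$'', Fact two is immediate, since $\tau_i\mid\tau$ and $\tau_j\mid\tau$; the work is to get it from the weaker hypothesis that merely the two individual trajectories are periodic and the particles are entangled.

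For Fact two, since both trajectories are bounded (being periodic) and $p_i$ interacts with $p_j$ infinitely often, a pigeonhole over the finite set of sites visited by $p_i$ produces a single scatterer $\mathbf h$ such that $p_i$ reaches $\mathbf h$ infinitely often and, between infinitely many pairs of consecutive $p_i$-visits, $p_j$ flips $\mathbf h$ an odd number of times; in particular $p_j$ reaches $\mathbf h$ infinitely often. Let $s_1<s_2<\cdots$ be the times $p_i$ reaches $\mathbf h$; periodicity gives $s_{m+n_i}=s_m+\tau_i$, where $n_i\ge 1$ is the number of $p_i$-visits per period, with the incoming velocity repeating period-to-period. For the trajectory of $p_i$ to close up, the orientation of $\mathbf h$ it sees must agree at $s_m$ and $s_{m}+\tau_i$, i.e. the total number of particle-visits to $\mathbf h$ during $[s_m,s_m+\tau_i)$ is even for every $m$. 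In the two-particle case $p_i$ accounts for exactly $n_i$ of them, so the count $h(s_m):=\#\{\text{$p_j$-visits to }\mathbf h\text{ in }[s_m,s_m+\tau_i)\}$ has the fixed parity of $n_i$, independent of $m$. Now $h(x):=\#\{\text{$p_j$-visits to }\mathbf h\text{ in }[x,x+\tau_i)\}$ is a $\tau_j$-periodic step function of $x$ (the set of $p_j$-visit times to $\mathbf h$ is carried into itself by the shift $\tau_j$), it is non-constant (its mean over a period is $n_j\tau_i/\tau_j$, not an integer once $\tau_i/\tau_j$ is irrational), and it jumps by at most $1$ at each point (the $p_j$-visit times are distinct, so at most one can leave and at most one enter the window at a given instant); hence $h$ attains two consecutive, therefore opposite-parity, values, each on a nonempty open subinterval of a period. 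If $\tau_i/\tau_j$ were irrational, $\{s_m\bmod\tau_j\}$ would be a finite union of forward orbits of the irrational rotation $x\mapsto x+\tau_i\pmod{\tau_j}$, hence dense in $[0,\tau_j)$, so it would meet both subintervals, forcing $h(s_m)$ to take both parities — a contradiction. Therefore $\tau_i/\tau_j\in\mathbb Q$.

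The ``consequently'' clause then follows: assume all ratios $\|\mathbf v_i\|/\|\mathbf v_j\|$ are irrational but some $p_m$ has a periodic, hence bounded, trajectory. Then $p_m$ cannot eventually stop interacting with the other particles, for a particle that ultimately no longer interacts reproduces the single-particle dynamics and propagates to infinity by Theorem~\ref{thm:1} (the trajectory is speed-independent). So $p_m$ lies in an entangled set $S$ with $|S|\ge 2$; picking $p_j\in S$ that flips the witnessing scatterer $\mathbf h$ infinitely often and running the argument above with $p_i=p_m$ yields $\|\mathbf v_m\|/\|\mathbf v_j\|\in\mathbb Q$, a contradiction. Hence every trajectory is aperiodic.

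The step I expect to be the main obstacle is precisely the $\tau_j$-periodicity-plus-equidistribution argument converting ``the $p_j$-visit count in every $\tau_i$-window anchored at a $p_i$-visit has constant parity'' into commensurability of the periods, and, more delicately, its extension to $N>2$: there the combined visit count at $\mathbf h$ from all other particles of $S$ (valid once we discard the finitely many earlier visits by particles outside $S$) must be split into the genuinely $\tau_j$-periodic, both-parity part coming from $p_j$ and the remaining periodic pieces, and one must invoke simultaneous equidistribution of $\{s_m\}$ on a product of circles, handling separately those periods $\tau_k$ with $\tau_i/\tau_k$ rational (so $\{s_m\bmod\tau_k\}$ is finite and one argues along a fixed residue). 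A minor point is the parenthetical ``(unbounded)'': the argument gives aperiodicity directly, and strengthening to unboundedness either uses the remark that a bounded orbit is eventually, hence by time-reversibility genuinely, periodic, or must acknowledge the open problem flagged in Section~\ref{sec:4} about bounded aperiodic trajectories.
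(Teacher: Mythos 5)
Your proposal is correct, and its arithmetic core --- $\tau\|\mathbf{v}_k\|$ is the integer length of the closed path traversed in one period, so $\|\mathbf{v}_i\|/\|\mathbf{v}_j\|$ is a ratio of integers --- is exactly the paper's entire proof. The difference lies in what ``the same periodic orbit'' is taken to mean: the paper tacitly assumes a single common period $\tau$ for both particles, which is the reading you dispose of in one line via $\tau_i\mid\tau$ and $\tau_j\mid\tau$, whereas you go on to derive the commensurability $\tau_i/\tau_j\in\mathbb{Q}$ from the weaker hypothesis that only the two individual trajectories are periodic and the particles interact infinitely often. That step --- constant parity of the $p_j$-visit count in $\tau_i$-windows anchored at $p_i$-visits, played against density of $\{s_m\bmod\tau_j\}$ under an irrational rotation meeting two opposite-parity level sets of the $\tau_j$-periodic step function $h$ --- is a genuinely new argument with no counterpart in the paper, and it is sound as written when only $p_i$ and $p_j$ visit the witnessing scatterer; what it buys is robustness against the (undefined) notion of a shared orbit, at the cost of the loose ends you honestly flag: the bookkeeping for $N>2$ at a scatterer visited by several particles, and the fact that in the ``consequently'' clause the partner $p_j$ of a hypothetically periodic $p_m$ need not itself be periodic, so Fact two does not directly apply. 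The paper sidesteps both issues only by positing the common period and asserting ``the result then follows,'' so these are not defects relative to the published argument. Your observation that ``(unbounded)'' does not follow from aperiodicity of a single trajectory alone is also apt, given the open problem acknowledged in Section~\ref{sec:4}.
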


\begin{proof}
If particles $p_i$ and $p_j$ are part of the same periodic orbit of period $\tau>0$ then both $||\mathbf{v}_i||\cdot\tau$ and $||\mathbf{v}_j||\cdot\tau$ are whole numbers. The reason is $||\mathbf{v}_k||\cdot\tau$ is the length of the path that particle $p_k$ takes in one period of its orbit for $k=i,j$. Since the triangular lattice we are using has bond with unit length this distance must be a positive integer. Thus, $||\mathbf{v}_i||/||\mathbf{v}_j||=(||\mathbf{v}_i||\cdot\tau)/(||\mathbf{v}_j||\cdot\tau)$ must be a rational number. The result then follows.
\end{proof}

If a single particle does not interact at any point in time with any other particle then its trajectory, irrespective of the particle's speed, will be as described in Theorem \ref{thm:1}, i.e. the particle will propagate in a single direction in a strip of width 1. A natural question is whether two or more particles can become entangled when the particle's move at \emph{rationally related speeds}, i.e. when the ratio of their speeds $||\mathbf{v}_i||/||\mathbf{v}_j||$ is a rational number. As it turns out, this can happen as is demonstrated in the following example.

\begin{example}\label{ex:2part}
Consider the two particle system with particles $p_1$ and $p_2$ shown in Figure \ref{Fig:6}. Here $p_1$ has speed $||\mathbf{v}_1||=1$ and $p_2$ has speed $||\mathbf{v}_2||=2$. The trajectory of the two particles are entangled forming a periodic orbit with period $\tau=36$ and size $s=13$.
\end{example}

\begin{figure}
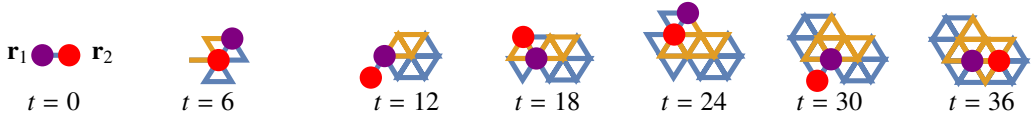

\begin{center}
\begin{tabular}{ccccccc}
	\begin{overpic}[scale=1.2]{multispeed1.pdf}
    \put(15,30){$\mathbf{r}_1$}
    \put(90,30){$\mathbf{r}_2$}
    \put(33,-15){$t=0$}
    \put(170,-15){$t=6$}
    \put(340,-15){$t=12$}
    \put(465,-15){$t=18$}
    \put(595,-15){$t=24$}
    \put(715,-15){$t=30$}
    \put(850,-15){$t=36$}
    \end{overpic} &
    \begin{overpic}[scale=1.2]{multispeed2.pdf}
    \end{overpic} &
    \begin{overpic}[scale=1.2]{multispeed3.pdf}
    \end{overpic} &
    \begin{overpic}[scale=1.2]{multispeed4.pdf}
    \end{overpic} &
    \begin{overpic}[scale=1.2]{multispeed5.pdf}
    \end{overpic} &
    \begin{overpic}[scale=1.2]{multispeed6.pdf}
    \end{overpic} &
    \begin{overpic}[scale=1.2]{multispeed7.pdf}
    \end{overpic}
\end{tabular}
\end{center}
\caption{A two particle periodic orbit in which particle $p_1$, shown in purple, has speed $||\mathbf{v}_1||=1$ and particle $p_2$, shown in red, has speed $||\mathbf{v}_2||=2$. The particles have period $\tau=36$.}\label{Fig:6}
\end{figure}

Although we have run extensive numerical tests for finding other such orbits, we have only found one other periodic trajectory for particles with different speeds. The other periodic orbit is similar to the one in Figure \ref{Fig:6} in that it consists of two particles one with speed 1 and the other with speed 2. Whether entanglements can happen for only two particles and whether they can happen only for particles moving the same speed (see Section \ref{sec:3}) or for one particle moving twice as fast as the other (see Example \ref{ex:2part}) remains an open question.

We note that in the case of $N=2$ particles, if one particle is moving much faster than the other it, is not possible for the two to form a periodic orbit.

\begin{theorem}\label{thm:5}
In the $(T,I,2)$ model if $||\mathbf{v}_1||/||\mathbf{v}_2||\geq 30$ then for any initial condition both particles escape to infinity.
\end{theorem}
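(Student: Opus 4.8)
The plan is to show that if $\|\mathbf{v}_1\|/\|\mathbf{v}_2\| \geq 30$ then particle $p_1$ (the fast one) visits too few scatterers per unit time for $p_2$ to ever ``double back'' on it, so that $p_2$ is forced to propagate in a single direction as in Theorem \ref{thm:1}, and then an interaction-counting argument shows $p_1$ must do the same. First I would normalize so that $\|\mathbf{v}_2\| = 1$ and $\|\mathbf{v}_1\| = c \geq 30$, and recall the mechanism behind Theorem \ref{thm:1}: a lone particle is confined to a width-$1$ strip because any attempt to move ``backward'' along the strip is blocked by the orientations the particle itself set on its previous pass. The key observation I would make precise is a \emph{locality} statement: for $p_2$'s forward motion to be deflected backward at some site $\mathbf{h}$, particle $p_1$ must have visited $\mathbf{h}$ (an odd number of times) in the short window since $p_2$ last touched $\mathbf{h}$ — and between two consecutive visits of $p_2$ to the same scatterer only a bounded number of time steps elapse (this is exactly the bounded blocking-distance estimate from \cite{Grosfils99}). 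So I would extract from \cite{Grosfils99} an explicit constant: there is $M$ (I expect something like $M \leq 7$ or so from the strip geometry) such that if $p_2$ revisits a scatterer at times $t_0 < t_1$ with no interaction in between, then $t_1 - t_0 \leq M$.

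Next I would quantify how far $p_1$ spreads. Since $p_1$ moves at speed $c$, in any time interval of length $M$ it traverses a path of length $cM$ on the lattice, hence touches at most $cM + 1$ scatterers. The point is the \emph{reverse}: the set of scatterers $p_1$ touches in a window of length $M$, while numerous, all lie within lattice-distance $\sim cM$ of $p_1$'s position — but more usefully, the set of scatterers that $p_2$ could possibly revisit in a window of length $M$ is a fixed bounded neighborhood of $p_2$'s current strip position (at most some constant $K$ sites, again from the strip geometry). For $p_1$ to interact with $p_2$ at all, $p_1$'s trajectory must pass through this bounded set, and crucially must do so the right number of times in the right window. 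I would then argue that because $c \geq 30$, once $p_1$ passes through $p_2$'s $K$-site neighborhood it leaves it within a single time step and, travelling at speed $\geq 30$, cannot return to that same neighborhood within the next $M$ steps unless it is itself being deflected backward — which, by Theorem \ref{thm:1} applied to $p_1$, can only happen if $p_1$ has been interacting with $p_2$. This sets up the core contradiction: sustained entanglement requires each particle to keep deflecting the other, but the speed mismatch means $p_1$ is through and gone before $p_2$'s blocking geometry can ``catch'' it, so at most finitely many interactions occur; after the last one, both particles propagate to infinity by Theorem \ref{thm:1}.

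Concretely the steps are: (1) fix normalization and recall/cite the strip-confinement and bounded-blocking-distance facts from \cite{Grosfils99}, extracting explicit constants $M$ and $K$; (2) prove the locality lemma — an interaction of $p_2$ with $p_1$ at time $t_1$ forces $p_1$ to have been inside a $K$-site set at some time in $(t_1 - M, t_1]$; (3) show that with $c \geq 30$, $p_1$'s consecutive passes through any fixed $K$-site set are separated by more than $M$ time steps \emph{unless $p_1$ is itself being deflected}, i.e. unless $p_1$ interacts with $p_2$; (4) combine (2) and (3) to bound the total number of mutual interactions, and conclude via Theorem \ref{thm:1} (and Proposition \ref{prop:3}'s philosophy that non-interacting particles propagate) that both escape. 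The main obstacle I anticipate is step (3): making rigorous the claim that a speed-$30$ particle genuinely cannot re-enter a bounded neighborhood quickly on its own. The naive bound says it takes $\geq K/c$ time to cross $K$ sites, which is \emph{small}, not large — so the argument cannot be purely kinematic. It must instead be \emph{combinatorial}: I need that the only way $p_1$ returns to a site within $M$ steps is the blocking mechanism of Theorem \ref{thm:1}, and the blocking mechanism sets up orientations that are only ``wrong'' (i.e. allow a return) if some other particle flipped them — so quantifying exactly how many scatterer-flips $p_2$ can contribute in a bounded window, versus how many $p_1$ needs to be diverted, is where the constant $30$ will come from. Pinning down that arithmetic — essentially, a worst-case count of how $p_2$'s slow trickle of flips can or cannot sabotage $p_1$'s self-blocking — is the delicate part; the rest is bookkeeping on top of \cite{Grosfils99} and Theorem \ref{thm:1}.
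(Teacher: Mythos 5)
Your proposal assembles the right raw materials---the blocking mechanisms of \cite{Grosfils99} with explicit time constants, a locality lemma saying an interaction forces the other particle into a bounded neighborhood within a bounded time window, and the intuition that the speed mismatch prevents sustained entanglement---but the argument has a genuine gap exactly where you flag it: step (3). You correctly observe that the claim ``a speed-$30$ particle cannot re-enter a bounded neighborhood quickly'' is false as a kinematic statement (the fast particle crosses $K$ sites in time $K/c$, which is \emph{short}), and you propose to rescue it by a combinatorial count of how many scatterer flips the slow particle can contribute per window versus how many are needed to divert the fast one. You never carry out that count, and it is not clear it closes: a single well-placed flip by the slow particle \emph{can} divert the fast particle (that is what an interaction is), so bounding the number of flips per window does not by itself rule out repeated diversion. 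A secondary issue is the order of deduction: you try to establish first that the slow particle propagates and then transfer this to the fast one, but the slow particle is the one most easily perturbed (the fast particle sweeps through many sites per unit time and can certainly flip scatterers in the slow particle's path at least once), so propagation of the slow particle is naturally a \emph{consequence} of the fast particle's escape, not a stepping stone to it. Your opening claim that the fast particle ``visits too few scatterers per unit time'' is also backwards.

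The paper closes the gap you identify with a different and more elementary device: a net-drift comparison rather than a flip count. Normalizing so the fast particle has unit speed and the slow one speed $1/30$, an interaction can only occur when the fast particle revisits a site, which by the blocking-mechanism analysis happens within $6$ time steps; after such an interaction the slow particle needs at least $24$ further time steps merely to reach an adjacent lattice site, during which it sits on a single bond and flips nothing, while the fast particle completes its first three blocking mechanisms (within $10+7+7=24$ steps) and advances three sites along its strip, away from the slow particle. From then on the fast particle's guaranteed forward progress of at least one site per $7$ time steps strictly dominates the slow particle's maximum speed of $1/30$, so the separation grows monotonically, no further interaction is possible, and Theorem \ref{thm:1} applies to each particle separately. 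In short, the constant $30$ enters not through a worst-case count of sabotaging flips but through the inequality $1/7 > 1/30$ between the fast particle's worst-case drift rate and the slow particle's top speed; without some argument of this kind (or a completed version of your flip-counting), your proof does not go through.
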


\begin{proof}
The proof that a single particle always propagates in a strip of width one is based on the notion of blocking mechanisms. A \emph{blocking mechanism} of a particle is simply part of the particle's trajectory in which the particle is guaranteed to move one lattice site away from its initial position. During a blocking mechanism the particle cannot move back more than one lattice site towards its initial position. The original proof of Theorem \ref{thm:1} is based on the fact that a single particle's trajectory can be partitioned into disjoint blocking mechanisms (see \cite{Grosfils99} for more details). Each blocking mechanism has a duration of 2 to 7 time-steps for a particle moving at unit speed. Moreover, the first \emph{complete} blocking mechanism a particle experiences after time zero happens within 10 time steps.

Note that in the $(T,I,2)$ model if $p_2$ never interacts with $p_1$ then both particles must escape to infinity. For the case in which $p_2$ does interact with $p_1$ we suppose, without loss in generality, that $||\mathbf{v}_1||=1$ and $||\mathbf{v}_2||=1/30$. If $p_2$ interacts with $p_1$ at time $t_*$ then there is some time $t_1<t_*$ at which $\mathbf{r}(t_1)=\mathbf{r}(t_*)=\mathbf{h}$ where time $t_1$ was the last time $p_1$ visited lattice site $\mathbf{h}$ and $p_2$ visited $\mathbf{h}$ at some time between $t_1$ and $t_2$. By checking each blocking mechanism, the largest number of steps it takes for a particle traveling at unit speed to return to a previously visited lattice site is $6$. Hence, letting $t_1=0$ then $t_*\leq 6$. As $||\mathbf{v}_2||=1/30$ this means only at some time $t_2\geq 30-6=24$ can $p_2$ reach a lattice site adjacent to $\mathbf{h}$. In particular, between times $t_*=0$ and $t_2=24$ particle $p_2$ is on a single lattice bond adjacent to $\mathbf{h}$.

Note that by time $t_2=24$ that $p_1$ will have passed at least through its first, second, and third blocking mechanisms, where the first is guaranteed within 10 time steps and the other two in 7 steps a piece. Since each blocking mechanism moves the particle one lattice site away from its initial position and $p_2$ could only have arrived at a site distance one from this site at time $t_2=24$ then $||\mathbf{r}_1(t_2)-\mathbf{r}_2(t_2)||\geq 2$. Importantly, $p_1$ cannot revisit $\mathbf{r}_2(t_2)$ since it lies at least distance 2 in the opposite direction from the one it is moving in its strip.

The claim then is that beyond time $t_2=24$ particle $p_2$ cannot interact with particle $p_1$. The reason is that the slowest $p_1$ can move through each blocking mechanism is $||\mathbf{v}_1(bl)||\geq \frac{1}{7}\cdot1=\frac{1}{7}$ since the longest blocking mechanism moves the particle a distance 1 in 7 unit steps and the particle is moving at unit speed. Similarly, the fastest that $p_2$ can move through a blocking mechanism is $||\mathbf{v}_2(bl)||\leq \frac{1}{2}\cdot\frac{1}{30}=\frac{1}{60}$. Hence, in at most seven time steps $p_1$ will have moved another lattice site along its strip away from its initial position not encountering any lattice sites $p_2$ has visited since before time $t=0$, since it cannot visit $\mathbf{r}_2(t_1)$ or $\mathbf{r}_2(t_1)$. Continuing in this manner, as $p_1$ moves much faster through any blocking mechanism than $p_2$ then it follows that $p_2$ cannot interact with $p_1$ beyond time $t_*=6$. Hence, both particles escape to infinity.
\end{proof}

The situation is much more complicated if we want to determine whether periodic orbits can or cannot exist for particles with differing speeds if $N>2$. The reason is that even if one particle is moving much faster than the others, a second slower particle can still interact with the faster particle possibly sending it toward a third or fourth, etc. Ruling out the possibility that the fast particle is not somehow caught between a number of much slower particles is quite challenging and how to extend Proposition \ref{prop:3} to the case in which there are $N>2$ particles is an open question.





\section{Conclusion}\label{sec:6}
In this paper we consider how moving from the single particle model $(T,I,1)$ to a multiparticle model $(T,I,N)$ changes the dynamics of the system. In the original system a single particle has a ``nearly" linear motion, at least viewed macroscopically, so that the particle appears to move in one direction as if in a vacumm. However, the microscopic interaction with the system's media means that two noninteracting particles can become entangled in a periodic motion, etc. which is a behavior that would not be observed if the particles were moving through empty space.

This dramatic change in dynamics can also be observed in other LLG models. If the same system considered in this paper is put on the hexagonal lattice, our numerical experiments indicate there is a transition from periodic dynamics for a single particle (see \cite{Webb14}), to subdiffusive behavior for multiple particles. Similarly, on the square lattice a single particle will have an aperiodic and therefore unbounded trajectory, but our numerical experiments show, similar to what is observed in this paper, that multiple particles in this model can become entangled and form periodic trajectories. Currently, it is unknown whether rigorous results similar to those in this paper can be established for these other related multiparticle systems, although Proposition \ref{prop:1} and Theorem \ref{thm:un} can be directly extended to these related models.

For the $(T,I,N)$ model considered in this paper there are also a number of open questions. Although many different periodic orbits have been identified, it is still unknown whether periodic orbits involving more than two particles exists. Similarly, it is unknown whether periodic orbits involving particles with different speeds, other than those given in Section \ref{sec:5}, exist and whether orbits other than the one given in Figure \ref{Fig:4} can be infinitely extended.

It may also be possible to study this system in the  limiting case where $N=\infty$ using techniques from statistical mechanics. However, it is worth noting that this is not that same as studying this model in the case where $N<\infty$ and we place the triangular lattice on the torus, etc. If the lattice is finite then by our results the particles trajectories, no matter the number, will be periodic. This is likely not the case on the infinite triangular lattice, in which we fix some initial density of particles, since particles may travel infinitely far, especially if there is a low density of other particles.

\begin{center}
References
\end{center}

\end{document}